\def\BibTeX{{\rm B\kern-.05em{\sc i\kern-.025em b}\kern-.08em
    T\kern-.1667em\lower.7ex\hbox{E}\kern-.125emX}}
\newtheorem{example}{Example}
\newtheorem{proposition}{Proposition}
\crefname{algorithm}{alg.}{algs.}
\Crefname{algorithm}{Algorithm}{Algorithms}
\tikzstyle{place}=[circle,thick,draw=blue!75,fill=blue!20,minimum size=6mm]
\tikzstyle{contour place}=[place,draw=red!100]
\tikzstyle{red place}=[place,draw=red!75,fill=red!20]
\tikzstyle{gray place}=[place,draw=black!100,fill=black!30]
\tikzstyle{transition}=[ rectangle,thick, fill=black, minimum width=8mm, inner ysep=2pt]
\tikzstyle{red transition}=[ rectangle,thick, fill=red!75, minimum width=8mm, inner ysep=2pt]
\tikzstyle{gray transition}=[ rectangle,thick, draw=black!100,fill=black!30, minimum width=8mm, inner ysep=2pt]
\tikzstyle{blue transition}=[ rectangle,thick, fill=blue!75, minimum width=8mm, inner ysep=2pt]
\tikzstyle{every label}=[black]
\tikzstyle{gateSEQ}=[diamond]
\tikzstyle{gateNULL}=[trapezium,trapezium left angle=120, trapezium right angle=120]
\tikzset{depth label/.style={
  label={[draw=none,green!60!black,label distance=-2pt]left:#1}}}
\tikzset{level label/.style={
  label={[draw=none,blue,label distance=-4pt]right:#1}}}
\tikzstyle{every node}=[initial text=]
\tikzstyle{location}=[rectangle, rounded corners, minimum size=12pt, draw=black, fill=blue!10, inner sep=2pt]
\tikzstyle{location10}=[location, minimum size=10pt]
\tikzstyle{invariant}=[draw=black, dotted, inner sep=1pt] 
\tikzstyle{goodtile} = [draw=green!50!black, fill=green]
\tikzstyle{badtile} = [draw=red!50!black, fill=red]
\tikzstyle{invisible}=[draw=none]
\tikzstyle{final}=[double]
\tikzstyle{urgent}=[fill=yellow!50]
\tikzstyle{bad}=[fill=red!50]
\tikzstyle{every node}=[initial text=]
\tikzstyle{final}=[double]
\tikzstyle{sync}=[draw=blue,thick]
\tikzstyle{square}=[regular polygon,regular polygon sides=4]
\tikzstyle{seq}=[path picture={
  \newcommand{\marginX}{\marginnote{\huge{\quad\textbf{!}\quad}}}
  \newcommand{\wop}[1]{\textcolor{red!50!blue}{\marginX{}[\textbf{Wojciech}: #1
      ]}}
  \newcommand{\lp}[1]{\textcolor{blue!50}{\marginX{}[\textbf{Laure}: #1 ]}}
  \newcommand{\teo}[1]{\textcolor{green!50!black}{\marginX{}[\textbf{Teofil}: #1]}}
  \newcommand{\ja}[1]{\textcolor{DarkOrchid}{\marginX{}[\textbf{Jaime}: #1 ]}}
  \newcommand{\todo}[1]{\textcolor{red}{\marginX{}TODO: #1}}
  \newcommand{\wop}[1]{}
  \newcommand{\lp}[1]{}
  \newcommand{\ja}[1]{}
  \newcommand{\teo}[1]{}
  \newcommand{\todo}[1]{}
\newcommand{\eg}{\emph{e.g.}\xspace}
\newcommand{\ie}{\emph{i.e.}\xspace}
\newcommand{\wrt}{\emph{w.r.t.}\xspace}
\newcommand{\tool}{\texttt{ADT2AMAS}\xspace}
\def\ADT/{ADTree}
\def\DAG/{DAG}
\newcommand{\actAttack}{\ensuremath{a}}
\newcommand{\ActAttack}{\ensuremath{A}}
\newcommand{\actDefence}{\ensuremath{d}}
\newcommand{\leaf}[1]{\ensuremath{\mathtt{#1}}\xspace}
\newcommand{\gate}[1]{\ensuremath{\mathtt{\MakeUppercase{#1}}}\xspace}
\newcommand{\gateAND}{\gate{and}}
\newcommand{\gateOR}{\gate{or}}
\newcommand{\gateCAND}{\gate{cand}}
\newcommand{\gateLEAF}{\gate{leaf}}
\newcommand{\gateNULL}{\gate{null}}
\newcommand{\gateNODEF}{\gate{nodef}}
\newcommand{\gateSAND}{\gate{sand}}
\newcommand{\gateSEQ}{\gate{seq}}
\newcommand{\gateSCAND}{\gate{scand}}
\newcommand{\nodeLabel}[1]{\ensuremath{\mathtt{#1}}\xspace}
\newcommand{\initTime}[1]{\ensuremath{\mathit{init\_time}(#1)}}
\newcommand{\Time}[1]{\ensuremath{\mathit{time}(#1)}}
\newcommand{\tunit}{\ensuremath{t_{\mathit{unit}}}}
\newcommand{\DAGset}{\ensuremath{\mathit{DAG\_set}}}
\newcommand{\inputDAG}{\ensuremath{\mathit{DAG}}}
\newcommand{\workingSet}{\ensuremath{S}}
\newcommand{\N}{\ensuremath{N}}
\newcommand{\varAgent}{\ensuremath{\mathit{agent}}}
\newcommand{\varSlot}{\ensuremath{\mathit{slot}}}
\newcommand{\varSlots}{\ensuremath{\mathit{slots}}}
\newcommand{\varChildNode}{\ensuremath{\mathit{child\_node}}}
\newcommand{\varCurrentNode}{\ensuremath{\mathit{current\_node}}}
\newcommand{\varParentAgent}{\ensuremath{\mathit{par\_agent}}}
\newcommand{\varParentNode}{\ensuremath{\mathit{parent\_node}}}
\newcommand{\varNodesLeft}{\ensuremath{\mathit{n\_remain}}}
\newcommand{\varBound}{\ensuremath{\mathit{lower\_bound}}}
\newcommand{\varUpperBound}{\ensuremath{\mathit{upper\_bound}}}
\newcommand{\varMaxAgents}{\ensuremath{\mathit{max\_agents}}}
\newcommand{\varNumAgents}{\ensuremath{\mathit{agents}}}
\newcommand{\varOutput}{\ensuremath{\mathit{output}}}
\newcommand{\varCurrentOutput}{\ensuremath{\mathit{curr\_output}}}
\newcommand{\varCandidate}{\ensuremath{\mathit{candidate}}}
\newcommand{\numLevels}{\ensuremath{\mathit{L}}}
\newcommand{\agent}{\ensuremath{\mathit{agent}}}
\newcommand{\children}{\ensuremath{\mathit{child}}}
\newcommand{\depth}{\ensuremath{\mathit{depth}}}
\newcommand{\keep}{\ensuremath{\mathit{keep}}}
\newcommand{\level}{\ensuremath{\mathit{level}}}
\newcommand{\node}{\ensuremath{\mathit{node}}}
\newcommand{\parent}{\ensuremath{\mathit{parent}}}
\newcommand{\slot}{\ensuremath{\mathit{slot}}}
\newcommand{\type}{\ensuremath{\mathit{type}}}
\def\scalebox{.5}{\input{examples/#}}1{\scalebox{.5}{\input{examples/#1}}}
\def\scalebox{.555}{\input{examples/#}}1{\scalebox{.555}{\input{examples/#1}}}
\newcommand{\casestudy}[1]{\textnormal{\small\textsf{\mdseries #1}}\xspace}
\newcommand{\csfs}{\casestudy{forestall}}
\newcommand{\csiot}{\casestudy{iot-dev}}
\newcommand{\csadmin}{\casestudy{gain-admin}}
\begin{document}

\title{Minimal Schedule with Minimal Number of Agents in Attack-Defence Trees
\thanks{The authors acknowledge the support of CNRS and PAN, under the IEA project MoSART,
and of NCBR Poland and FNR Luxembourg, under the PolLux/FNR-CORE project STV (POLLUX-VII/1/2019).}
}

\author{\IEEEauthorblockN{Jaime Arias\textsuperscript{1}, Laure Petrucci\textsuperscript{1}}
\IEEEauthorblockA{\textsuperscript{1}\textit{LIPN, CNRS UMR 7030,} \\
\textit{Université Sorbonne Paris Nord}\\
Villetaneuse, France \\
\{arias, petrucci\}@lipn.univ-paris13.fr}
\and
\IEEEauthorblockN{{\L}ukasz Maśko\textsuperscript{2}, Wojciech Penczek\textsuperscript{2}, Teofil Sidoruk\textsuperscript{2, 3}}
\IEEEauthorblockA{\textsuperscript{2}\textit{Institute of Computer Science, Polish Academy of Sciences} \\
\textsuperscript{3}\textit{Faculty of Math. and Inf. Science, Warsaw University of Technology}\\
Warsaw, Poland \\
\{masko, penczek, t.sidoruk\}@ipipan.waw.pl}
}

\maketitle

\begin{abstract}
  Expressing attack-defence trees in a multi-agent setting allows for studying a new
  aspect of security scenarios,
	namely how the number of agents and their task assignment impact the performance,
	\eg attack time, of strategies executed by opposing coalitions.
	Optimal scheduling of agents' actions, a non-trivial problem, is thus vital.
	We discuss associated caveats and	propose an algorithm that synthesises such an assignment,
	targeting minimal attack time and using minimal number of agents for a given attack-defence tree.
\end{abstract}

\begin{IEEEkeywords}
attack-defence trees, multi-agent systems, scheduling
\end{IEEEkeywords}

\section{Introduction}
\label{sec:intro}


Security of safety-critical multi-agent systems \cite {Wooldridge02a} is a major challenge. 
Attack-defence trees \cite{KordyMRS10,Zaruhi_pareto_2015} have been developed 
to evaluate the safety of systems and to study interactions between attacker and defender parties.
They provide a simple graphical formalism of possible attacker's actions to be taken in order
to attack a system and the defender's defenses employed to protect the system. 
Recently, it has been proposed to model attack-defence trees (\ADT/s) in the formalism of
asynchronous multi-agent systems (AMAS) extended with certain \ADT/ 
characteristics~\cite{ICECCS2019,ICFEM2020}.
In this setting, one can reason about attack/defence scenarios 
considering agent distributions over tree nodes and their impact on the feasibility and performance
(quantified by metrics such as time and cost) of attacking and defending strategies executed by specific coalitions.

\subsection{Minimal schedule with minimal number of agents}

The time metric, on which we focus here, is clearly affected by
both the number of available agents and their distribution over \ADT/ nodes. 
Hence, there arises the problem of optimal scheduling, \ie{} obtaining an assignment that achieves
the lowest possible time, while using the minimum number of agents required for an attack to be feasible. 
To that end, we first preprocess the input \ADT/, transforming it into a Directed Acyclic Graph (\DAG/),
where specific types of \ADT/ gates are replaced with sequences of nodes with normalised time 
(\ie{} duration of either zero, or the greatest common factor across all nodes of
the original \ADT/).
Because some \ADT/ constructs (namely, \gateOR gates and defences) induce multiple alternative outcomes,
we execute the scheduling algorithm itself on a number of independently considered \DAG/ variants.
For each such variant, we begin by iterating over its nodes, labeling them in accordance with their depth and height in the graph.
The lowest possible attack time follows from the longest sequence of nodes in the \DAG/,
because these actions cannot be handled any faster by executing them in parallel.
In order to minimise the number of agents as well, we synthesise a schedule multiple times in a divide-and-conquer strategy,
adjusting the number of agents until the lowest one that produces a valid assignment is found.
Since we preserve labels during the preprocessing step, all \DAG/ nodes are traceable back to specific gates and leaves of the original \ADT/.
Thus, in the final step we ensure that the same agent is assigned to nodes of the same origin, reshuffling the schedule if necessary.

While there are some clear parallels with multi-processor task scheduling,
the \ADT/ formalism also introduces a number of unique caveats to consider.
Consequently, our approach differs from that of classical process scheduling,
whose techniques cannot be directly applied in this case.
We expand on this comparison in \Cref{sec:intro:relatedwork}.

\subsection{Contributions}

In this paper, we:
\begin{enumerate*}[($i$)]
	\item present and prove the correctness of an algorithm for \ADT/s which finds an optimal assignment of the minimal number of agents for all possible \DAG/ variants of a given attack/defence scenario,
	\item show the scheduling algorithm's complexity to be quadratic in the number of nodes of its preprocessed input \DAG/, 
	\item implement the algorithm in our tool \tool and evaluate experimental results.
\end{enumerate*}

\subsection{Related work}
\label{sec:intro:relatedwork}

\ADT/s \cite{KordyMRS10,KMRS14}
are a popular formalism that has been implemented in a broad range
of analysis frameworks \cite{Gadyatskaya_2016,AGKS15,GIM15,ANP16},
comprehensively surveyed in \cite{KPCS13,WAFP19}.
They remain extensively studied today \cite{FilaW20}.
Of particular relevance is the \ADT/ to AMAS translation \cite{ICFEM2020},
based on the semantics from \cite{AAMASWJWPPDAM2018a}.
Furthermore, the problem discussed in this paper is clearly related to
parallel program scheduling \cite{CompSched,KwokA99}.
Due to time normalisation, it falls into the category of Unit Computational Cost (UCC) graph scheduling problems,
which can be effectively solved for tree-like structures \cite{Hu1961},
but cannot be directly applied to a set of \DAG/s.
Although a polynomial solution for interval-ordered \DAG/s was proposed by \cite{PapaYan1979},
their algorithm does not guarantee the minimal number of agents.
Due to zero-cost communication in all considered graphs, the problem can also be classified as No Communication (NC) graph scheduling.
A number of heuristic algorithms using list scheduling were proposed \cite{CompSched}, including
Highest Levels First with No Estimated Times (HLFNET),
Smallest Co-levels First with no Estimated Times (SCFNET),
and Random, where nodes in the DAG are assigned priorities randomly.
Variants assuming non-uniform node computation times are also considered, 
but are not applicable to the problem solved in this paper.
Furthermore, this class of algorithms does not aim at finding a schedule with the minimal number of processors or agents.
On the other hand, known algorithms that include such a limit, \ie for the Bounded Number of Processors (BNP) class of problems,
assume non-zero communication cost and rely on the clustering technique,
reducing communication, and thus schedule length, by mapping nodes to processing units.
Hence, these techniques are not directly applicable. 

The algorithm described in this paper can be classified as list scheduling with a fusion of HLFNET and SCFNET heuristics,
but with additional restriction on the number of agents used.
The length of a schedule is determined as the length of the critical path of a graph.
The number of minimal agents needed for the schedule is found with bisection.

Branching schedules analogous to the variants discussed in \Cref{sec:preprocess} have been previously explored, 
albeit using different models that either include probability \cite{Towsley86}
or require an additional \DAG/ to store possible executions \cite{ElRewiniA95}.
Zero duration nodes are also unique to the \ADT/ setting.

To the best of our knowledge, this is the first work dealing with agents in this context.
Rather, scheduling in multi-agent systems typically focuses on agents' \textit{choices}
in cooperative or competitive scenarios, \eg in models such as BDI \cite{NunesL14,DannT0L20}.

\subsection{Outline}

The rest of the paper is structured as follows.
The next section briefly recalls the \ADT/ formalism and its translation to multi-agent systems.
In \Cref{sec:preprocess}, several preprocessing steps are discussed,
including transforming the input tree to a DAG, normalising node attributes,
and handling different types of nodes. 
\Cref{sec:algo} presents steps performed by the main algorithm, as well as a proof of its correctness and optimality.
The algorithm, implemented in our tool \tool\cite{ADT2AMASDemoPaper},
is demonstrated in practice on three use cases in \Cref{sec:expe}.
The final section contains conclusions and perspectives.

\section{Attack-Defence Trees}
\label{sec:ADT}


To keep the paper self-contained, we briefly recall the basics of \ADT/s
and their translation to a multi-agent setting.

\subsection{Attack-defence trees}
\label{sec:adt:adt}

\ADT/s are a well-known formalism that models security scenarios
as an interplay between attacking and defending parties.
\Cref{fig:adt:constructs} depicts basic constructs
used in examples throughout the paper.
For a more comprehensive overview, we refer the reader to~\cite{ICFEM2020}.


\begin{figure}[!!htb]
	\begingroup
		\def\nodesubtree{\node[isosceles triangle, isosceles triangle apex angle=70, shape border rotate=90, minimum size=6.5mm]}
		\def\nodeattack{\node[state,draw=red,ultra thick]}
		\def\nodedefence{\node[rectangle,draw=Green,ultra thick]}
	\captionsetup[subfigure]{justification=centering}
	\centering
	\begin{subfigure}[b]{0.2\linewidth}
		\centering
		\scalebox{0.5}{
			\begin{tikzpicture}
				\normalsize
				\nodeattack[minimum size=6mm] (LA) {$\actAttack{}$};
			\end{tikzpicture}
		}
		\subcaption{leaf (attack)}
	\end{subfigure}
	\hfill
	\begin{subfigure}[b]{0.2\linewidth}
		\centering
		\scalebox{0.5}{
			\begin{tikzpicture}[every node/.style={ultra thick,draw=red,minimum size=6mm}]
				\normalsize
				\node[and gate US,point up,logic gate inputs=nn] (A)
					{\rotatebox{-90}{\large$\ActAttack$}};
				\nodesubtree at (-1,-1.3) (a1) {$\!\actAttack_1\!\!$};
				\nodesubtree at ( 1,-1.3) (an) {$\!\actAttack_n\!\!$};
				\node[draw=none] at (0,-1.3) {$\cdots$};
				\draw (a1.north) -- ([yshift=1.5mm]a1.north) -| (A.input 1);
				\draw (an.north) -- ([yshift=1.3mm]an.north) -| (A.input 2);
			\end{tikzpicture}
		}
		\subcaption{\gateAND}
	\end{subfigure}
	\hfill
	\begin{subfigure}[b]{0.2\linewidth}
		\centering
		\scalebox{0.5}{
			\begin{tikzpicture}[every node/.style={ultra thick,draw=red,minimum size=6mm}]
				\normalsize
				\node[or gate US,point up,logic gate inputs=nn] (A)
					{\rotatebox{-90}{\large$\ActAttack$}};
				\nodesubtree at (-1,-1.3) (a1) {$\!\actAttack_1\!\!$};
				\nodesubtree at ( 1,-1.3) (an) {$\!\actAttack_n\!\!$};
				\node[draw=none] at (0,-1.3) {$\cdots$};
				\draw (a1.north) -- ([yshift=1.5mm]a1.north) -| (A.input 1);
				\draw (an.north) -- ([yshift=1.3mm]an.north) -| (A.input 2);
			\end{tikzpicture}
		}
		\subcaption{\gateOR}
	\end{subfigure}
	\hfill
		\begin{subfigure}[b]{0.2\linewidth}
		\centering
		\scalebox{0.5}{
			\begin{tikzpicture}[every node/.style={ultra thick,draw=red,minimum size=6mm}]
				\normalsize
				\node[and gate US,point up,logic gate inputs=nn, seq=4pt] (A)
					{\rotatebox{-90}{\large$\ActAttack$}};
				\nodesubtree at (-1,-1.3) (a1) {$\!\actAttack_1\!\!$};
				\nodesubtree at ( 1,-1.3) (an) {$\!\actAttack_n\!\!$};
				\node[draw=none] at (0,-1.3) {$\cdots$};
				\draw (a1.north) -- ([yshift=1.5mm]a1.north) -| (A.input 1);
				\draw (an.north) -- ([yshift=1.5mm]an.north) -| (A.input 2);
			\end{tikzpicture}
		}
		\subcaption{\gateSAND}
		\end{subfigure}
	\hfill
		\begin{subfigure}[b]{0.2\linewidth}
			\centering
			\scalebox{0.5}{
				\begin{tikzpicture}
					\normalsize
					\nodedefence[minimum size=6mm] (LD) {$\actDefence{}$};
				\end{tikzpicture}
			}
			\subcaption{leaf (defence)}
		\end{subfigure}
	\hfill
		\begin{subfigure}[b]{0.2\linewidth}
		\centering
		\scalebox{0.5}{
			\begin{tikzpicture}[every node/.style={ultra thick,draw=red,minimum size=6mm}]
				\normalsize
				\node[and gate US,point up,logic gate inputs=ni] (A)
					{\rotatebox{-90}{\large$\ActAttack$}};
				\nodesubtree             at (-1,-1.36) (a) {$\actAttack$};
				\nodesubtree[draw=Green] at ( 1,-1.32) (d) {$\!\actDefence\!$};
				\draw (a.north) -- ([yshift=1.4mm]a.north) -| (A.input 1);
				\draw (d.north) -- ([yshift=1.5mm]d.north) -| (A.input 2);
			\end{tikzpicture}
		}
		\subcaption{\gateCAND}
	\end{subfigure}
	\hfill
		\begin{subfigure}[b]{0.2\linewidth}
		\centering
		\scalebox{0.5}{
			\begin{tikzpicture}[every node/.style={ultra thick,draw=red,minimum size=6mm}]
				\normalsize
				\node[or gate US,point up,logic gate inputs=ni] (A)
					{\rotatebox{-90}{\large$\ActAttack$}};
				\nodesubtree             at (-1,-1.28) (a) {$\actAttack$};
				\nodesubtree[draw=Green] at ( 1,-1.24) (d) {$\!\actDefence\!$};
				\draw (a.north) -- ([yshift=1.4mm]a.north) -| (A.input 1);
				\draw (d.north) -- ([yshift=1.5mm]d.north) -| (A.input 2);
			\end{tikzpicture}
		}
		\subcaption{\gateNODEF}
	\end{subfigure}
	\hfill
		\begin{subfigure}[b]{0.2\linewidth}
		\centering
		\scalebox{0.5}{
			\begin{tikzpicture}[every node/.style={ultra thick,draw=red,minimum size=6mm}]
				\normalsize
				\node[and gate US,point up,logic gate inputs=ni, seq=9pt] (A)
					{\rotatebox{-90}{\large$\ActAttack$}};
				\nodesubtree             at (-1,-1.36) (a) {$\actAttack$};
				\nodesubtree[draw=Green] at ( 1,-1.32) (d) {$\!\actDefence\!$};
				\draw (a.north) -- ([yshift=1.4mm]a.north) -| (A.input 1);
				\draw (d.north) -- ([yshift=1.5mm]d.north) -| (A.input 2);
			\end{tikzpicture}
		}
		\subcaption{\gateSCAND}
	\end{subfigure}
	\endgroup

	\caption{Basic \ADT/ constructs \label{fig:adt:constructs}}
\end{figure}

Attacking and defending actions are depicted in red and green, respectively.
Leaves represent individual actions at the highest level of granularity.
Different types of gates allow for modeling increasingly broad intermediary goals,
all the way up to the root, which corresponds to the overall objective.
\gateOR and \gateAND gates are defined analogously to their logical counterparts.
\gateSAND is a sequential variant of the latter, \ie the entire subtree $a_i$ needs to be completed before handling $a_{i+1}$.
Although only shown in attacking subtrees here, these gates may refine defending goals in the same manner.
Reactive or passive countering actions can be expressed using gates
\gateCAND (counter defence; successful iff \actAttack{} succeeds and \actDefence{} fails),
\gateNODEF (no defence; successful iff either \actAttack{} succeeds or \actDefence{} fails),
and \gateSCAND (failed reactive defence; sequential variant of \gateCAND, where \actAttack{} occurs first).
We collectively refer to gates and leaves as \textit{nodes}.

\ADT/ nodes may additionally have numerical \textit{attributes},
\eg the time needed for an attack, or its financial cost.
Boolean functions over these attributes, called \textit{conditions},
may then be associated with counter-defence nodes to serve as additional constraints
for the success or failure of a defending action.

In the following, the treasure hunters \ADT/ in \Cref{fig:adt:treasure}
will be used as a running example.
While both the gatekeeper \leaf{b} and the door \leaf{f}
need to be taken care of to steal the treasure (\gate{ST}),
just one escape route (either \leaf{h} or \leaf{e}) is needed
to flee (\gate{GA}), with \gate{TF} enforcing sequentiality.
Note the \gate{TS} counter defence with an additional constraint on the police's success \leaf{p}.



\begin{figure}[!!htb]
	\hspace{-1em}%
	\begin{subfigure}[b]{0.45\linewidth}
		\centering
		\scalebox{.75}{
			\begin{tikzpicture}
				[every node/.style={ultra thick,draw=red,minimum size=6mm}]
				\node[and gate US,point up,logic gate inputs=ni] (ca)
				{\rotatebox{-90}{\texttt{TS}}};
				\node[rectangle,draw=Green,minimum size=8mm, below = 5mm of ca.west, xshift=10mm] (d) {\texttt{p}};
				\draw (d.north) -- ([yshift=0.28cm]d.north) -| (ca.input 2);
				\node[and gate US,point up,logic gate inputs=nn, seq=4pt, below = 9mm of ca.west, yshift=14mm] (A)
				{\rotatebox{-90}{\texttt{TF}}};
				\draw (A.east) -- ([yshift=0.15cm]A.east) -| (ca.input 1);
				\node[and gate US,point up,logic gate inputs=nn, below = 9mm of A.west, yshift=14mm] (a1)
				{\rotatebox{-90}{\texttt{ST}}};
				\draw (a1.east) -- ([yshift=0.15cm]a1.east) -| (A.input 1);
				\node[state, below = 4mm of a1.west, xshift=-5mm] (a1n) {\texttt{b}};
				\draw (a1n.north) -- ([yshift=0.15cm]a1n.north) -| (a1.input 1);
				\node[state, below=4mm of a1.west, xshift=5mm] (a11) {\texttt{f}};
				\draw (a11.north) -- ([yshift=0.15cm]a11.north) -| (a1.input 2);
				\node[or gate US,point up,logic gate inputs=nn, below = 10mm of A.west, yshift=-7mm] (a2)
				{\rotatebox{-90}{\texttt{GA}}};
				\draw (a2.east) -- ([yshift=0.15cm]a2.east) -| (A.input 2);
				\node[state, below = 4mm of a2.west, xshift=-5mm] (a21) {\texttt{h}};
				\draw (a21.north) -- ([yshift=0.15cm]a21.north) -| (a2.input 1);
				\node[state, below=4mm of a2.west, xshift=5mm] (a2n) {\texttt{e}};
				\draw (a2n.north) -- ([yshift=0.15cm]a2n.north) -| (a2.input 2);
			\end{tikzpicture}
		}
		\subcaption{\ADT/}
	\end{subfigure}
	\hspace{-.5em}
	\begin{subfigure}[b]{.45\linewidth}
		\centering
		\scalebox{.75}{\parbox{\linewidth}{%
				~%
				\begin{tabular}{r@{~}l@{~\;}r@{~\;}r}
					\multicolumn{2}{l}{\bf Name} & {\bf Cost}         & {\bf Time}              \\
					\hline
					\gate{TS}                    & (treasure stolen)  &                &        \\
					\leaf{p}                     & (police)           & \EUR{100}      & 10 min \\
					\gate{TF}                    & (thieves fleeing)  &                &        \\
					\gate{ST}                    & (steal treasure)   & 							 & 2 min  \\
					\leaf{b}                     & (bribe gatekeeper) & \EUR{500}      & 1 h    \\
					\leaf{f}                     & (force arm. door) 	& \EUR{100}      & 2 h    \\
					\gate{GA}                    & (get away)         &                &        \\
					\leaf{h}                     & (helicopter)       & \EUR{500}      & 3 min  \\
					\leaf{e}                     & (emergency exit)   &                & 10 min
				\end{tabular}
				~\\[.2ex]
				\uline{\textbf{Condition for \texttt{TS}}}:\\[.2ex]
				\mbox{$\initTime{\mathtt{p}} > \initTime{\mathtt{ST}} + \Time{\mathtt{GA}}$}
			}}
		\vspace{1ex}
		\subcaption{Attributes of nodes}
	\end{subfigure}
	\caption{Running example: treasure hunters}
	\label{fig:adt:treasure}
\end{figure}

\subsection{Translation to extended AMAS}
\label{sec:adt:adt2amas}

Asynchronous multi-agent systems (AMAS) \cite{AAMASWJWPPDAM2018a} are essentially networks of automata,
which synchronise on shared transitions and interleave private ones for asynchronous execution.
An extension of this formalism with attributes and conditional constraints to model \ADT/s,
and the translation of the latter to extended AMAS, were proposed in \cite{ICFEM2020}.
Intuitively, each node of the \ADT/ corresponds to a single automaton in the resulting network.
Specific patterns, embedding reductions to minimise state space explosion \cite{ICECCS2019},
are used for different types of \ADT/ constructs.
As the specifics exceed the scope and space of this paper, we refer the reader to \cite{AAMASWJWPPDAM2018a} for the AMAS semantics,
and to \cite{ICFEM2020} for the details on the translation.


In the multi-agent setting, groups of agents working for the attacking and defending parties can be considered.
Note that the \textit{feasibility} of an attack is not affected by the number or distribution of agents over \ADT/ nodes,
as opposed to some \textit{performance} metrics, such as time 
(\eg a lone agent can handle all the actions sequentially, albeit usually much slower).

\subsection{Assignment of agents for \ADT/s}
\label{sec:adt:sched}

Consequently, the optimal distribution of agent coalitions is of vital importance for both parties,
allowing them to prepare for multiple scenarios, depending on how many agents they can afford to recruit
(thereby delaying or speeding up the completion of the main goal).
For instance, the thieves in \Cref{fig:adt:treasure}, knowing the police response time, would have to plan accordingly
by bringing a sufficiently large team and, more importantly, schedule their tasks to make the most of these numbers.
Thus, we can formulate two relevant and non-trivial scheduling problems.
%
{\em The first one}, not directly addressed here, is obtaining the assignment using a given number of agents that results 
in optimal execution time.
{\em The second one}, on which we focus in this paper, is synthesising an assignment that achieves a particular execution 
time using the least possible number of agents.
Typically, the minimum possible time is of interest here.
As we show in \Cref{sec:preprocess}, this time is easily obtainable from the structure of the input \ADT/ itself
(and, of course, the time attribute of nodes).
However, our approach can also target a longer attack time if desired.
In the next section, we discuss it in more detail as normalisation of the input tree is considered, 
along with several other preprocessing steps.

\section{Preprocessing the tree}
\label{sec:preprocess}

In this preprocessing step, an \ADT/ is transformed into
\DAG/s (\emph{Directed Acyclic Graphs}) of actions of the same duration.
This is achieved by splitting nodes into sequences of such actions, mimicking the
scheduling enforced by \ADT/s sequential gates, and considering the different possibilities
of defences.
Therefore, we introduce a sequential node \gateSEQ, which only waits for some input,
processes it and produces some output. It is depicted as a lozenge (see nodes 
$N_1$ and $N_2$ in \Cref{fig:pre:time:ex}).


In what follows, we assume one
time unit is the greatest common factor of time durations across all nodes in the
input \ADT/, \ie{} $\tunit = \mathit{gcf}(t_{N_1} \dots t_{N_{|\ADT/|}})$.
By \textit{time slots}, we refer to fragments of the schedule whose length is $\tunit$.
That is, after normalisation, one agent can handle exactly one node of non-zero duration
within a single time slot.
%
Note that, during the preprocessing steps described in this section, node labels are
preserved to ensure backwards traceability. 
Their new versions are either primed or indexed.

\subsection{Nodes with no duration}
\label{sec:pre:zero}
It happens that several nodes have no time parameter set, and are thus considered to have a duration of $0$. 
Such nodes play essentially a structuring role.
Since they do not take any time, the following proposition is straightforward.

\begin{proposition}
	\label{prop:pre:zero}
	Nodes with duration $0$ can always be scheduled immediately before their parent node
	or after their last occurring child, using the same agent in the same time slot.
\end{proposition}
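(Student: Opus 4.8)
The plan is to argue directly from the definition of a schedule together with the fact that a duration-$0$ node occupies no time slot of its own. First I would fix a valid schedule for the \DAG/ and a node $N$ with $\duration(N) = 0$. The only scheduling constraints involving $N$ are the precedence constraints coming from the edges of the \DAG/: every child of $N$ must be completed before $N$ starts, and $N$ must be completed before its parent starts. Since $N$ has zero duration, ``completing'' $N$ takes no time, so the start and end instants of $N$ coincide; I would denote this instant $\tau$.

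Next I would show $\tau$ can be shifted to either of the two claimed positions without violating any constraint. For the ``after its last occurring child'' option: let $t_c$ be the latest completion time among the children of $N$ (or the start of the schedule if $N$ is a leaf-level structuring node with no children). Every precedence constraint from a child to $N$ is satisfied iff $\tau \ge t_c$, so $\tau = t_c$ is admissible; placing $N$ at exactly $t_c$ means it sits in the same time slot in which that last child finishes, and since $N$ takes no time it can be handled by the very agent that just finished that child. Symmetrically, for the ``immediately before its parent'' option: let $t_p$ be the start time of the parent of $N$; the constraint from $N$ to its parent is satisfied iff $\tau \le t_p$, so $\tau = t_p$ works, and $N$ then lies in the parent's opening time slot and can be assigned to the agent about to execute the parent. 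In both cases the move does not disturb any other node's placement, so the resulting schedule is still valid and has the same makespan.

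I would also note the mild well-definedness point that these two target positions are compatible with each other: because the original schedule is valid we have $t_c \le t_p$, so the interval $[t_c, t_p]$ into which $\tau$ may be slid is nonempty, and the two extreme choices named in the statement are exactly its endpoints. Finally, since duration-$0$ nodes may themselves be children or parents of other duration-$0$ nodes, I would remark that the argument applies to each such node individually and, applied repeatedly (processing a maximal chain of zero-duration nodes from the leaves upward, or from the root downward), collapses every zero-duration node into the time slot of a non-zero neighbour without ever increasing the agent count or the makespan.

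The main obstacle is essentially bookkeeping rather than mathematical depth: one must be careful that ``using the same agent in the same time slot'' is genuinely consistent — i.e. that the agent executing the last child (resp. the parent) is free at instant $\tau$ to also handle $N$. This is immediate once we observe that a zero-duration task does not consume the agent for any positive length of time, so stacking it at the boundary instant $\tau$ conflicts with nothing; making this precise is the only place where the ``duration $0$'' hypothesis is really used, and it is exactly what \Cref{prop:pre:zero} asserts. Hence the proposition follows, and as the excerpt remarks it is indeed straightforward.
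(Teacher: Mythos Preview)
Your argument is correct and is essentially a careful unpacking of what the paper leaves implicit: the paper gives no proof at all for this proposition, merely remarking beforehand that ``since they do not take any time, the following proposition is straightforward.'' Your proposal supplies exactly the reasoning the authors omit --- that a zero-duration node can be slid to either endpoint of the interval $[t_c,t_p]$ determined by its last child and its parent, and piggy-backed on the agent already present there without consuming any slot --- so there is no substantive difference in approach, only in level of detail.
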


Preprocessing introduces nodes similar to \gateSEQ{} but with 0~duration, called \gateNULL and depicted as trapeziums (Fig.~\ref{fig:pre:sand:ex}).

\subsection{Normalising time}
\label{sec:pre:norm}

The first pre-processing step prior to applying the scheduling algorithm
normalises the time parameter of nodes.

\begin{proposition}
	\label{prop:pre:time}
	Any node $N$ of duration $t_N = n \times \tunit, n\neq0$ can be
	replaced with an equivalent sequence consisting of a node $N'$ 
	(differing from $N$ only in its $0$ duration) and $n$ \gateSEQ nodes $N_1$,
	\ldots, $N_n$ of duration $\tunit$.
\end{proposition}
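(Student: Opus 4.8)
The plan is first to pin down what ``equivalent'' must mean for the later scheduling algorithm to be unaffected, and then to verify it. The relevant invariants of a (sub-)scenario are: (a) the Boolean value ``produced'' at $N$ for every assignment of success/failure to the leaves below it; (b) the minimal completion time of that value; and (c) the minimal number of agents that realises it --- more generally, the whole set of achievable $(\text{time},\#\text{agents})$ pairs. Since the rewriting is purely local, it suffices to establish (a)--(c) for the replacement block $\{N',N_1,\dots,N_n\}$ in place of $N$; it can then be substituted inside any \ADT/-derived \DAG/.

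I would fix the block precisely: $N'$ keeps the type, the children (incoming edges) and, if present, the condition of $N$, but is given duration $0$; a fresh path $N'\to N_1\to\dots\to N_n$ of \gateSEQ nodes, each of duration \tunit, is appended; and every edge that left $N$ toward a parent is redirected to leave $N_n$. Invariant (a) is then immediate: $N'$ combines its children exactly as $N$ did, and each \gateSEQ node merely relays its input, so $N_n$ emits the value $N$ would have emitted; and since the block's total duration is $n\times\tunit=t_N$, any condition referring to the time of $N$ still evaluates the same.

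For (b) and (c) I would exhibit translations between valid schedules of the two \DAG/s. One direction is easy surgery: in a valid $k$-agent schedule of the original, if $N$ occupies the $n$ consecutive slots $[\tau,\tau+n)$ of some agent $g$, reuse exactly those slots for $N_1,\dots,N_n$ in order and place the duration-$0$ node $N'$ at the start of slot $\tau$ on $g$ --- legitimate by \Cref{prop:pre:zero}, as every child of $N$ has finished by $\tau$; all precedences still hold ($N_i$ ends at $\tau+i\le\tau+n$, and every parent of $N$ started no earlier than $\tau+n$), and neither makespan nor agent count changes, so the transformed \DAG/ is never harder to schedule. The opposite direction --- that it is never \emph{easier} --- is the crux: from a valid schedule of the transformed \DAG/, the path $N'\to N_1\to\dots\to N_n$ occupies $n$ slots at strictly increasing times, all after the last child of $N$ has completed, and a slot-by-slot exchange argument (repeatedly swapping a chain node with whatever a fixed agent does in that slot --- a move that only relocates nodes between agents \emph{within} a slot, hence preserving validity and the number of agents) brings $N_1,\dots,N_n$ onto one agent and then into $n$ contiguous slots, after which the block collapses back to a single duration-$n\tunit$ node $N$. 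Together these show the two \DAG/s realise the same $(\text{time},\#\text{agents})$ pairs, and in particular the same minimal attack time and the same minimal number of agents.

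The step I expect to be the real obstacle is exactly this ``never easier'' direction: one must rule out that the finer granularity of the block lets a scheduler interleave the chain with unrelated work so as to save an agent or shave time --- that is, make the exchange/compaction argument airtight, in particular when the $N_i$ are not already contiguous and when the block lies off the critical path with slack to spare. (That the chain ought to end up on a single agent is, reassuringly, exactly what the algorithm's later same-origin reshuffling step enforces, since $N',N_1,\dots,N_n$ all trace back to $N$.) The remaining bookkeeping, namely the interaction of the new duration-$0$ node $N'$ with adjacent zero-duration nodes, is handled by iterating \Cref{prop:pre:zero} and is routine.
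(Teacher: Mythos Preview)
The paper gives no proof of this proposition; it is stated as a construction and illustrated by an example, and the word ``equivalent'' is used loosely --- it covers preservation of the Boolean semantics and of total duration along every root--leaf path (hence of the minimal attack time), but the paper never claims that the set of achievable $(\text{time},\#\text{agents})$ pairs is preserved under a non-preemptive reading of the original node $N$.

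Your proposal aims at this stronger equivalence, and you are right to flag the ``never easier'' direction as the crux; but that direction is in fact false. A small \ADT/ already breaks it: take the root to be $\gateAND(X,S)$ with $X$ a leaf of duration $2\tunit$, and $S=\gateSAND(A,E)$ where $A$ is a leaf of duration $\tunit$ and $E=\gateAND(B,C)$ is a gate of duration $\tunit$ with leaf children $B,C$ each of duration $\tunit$. After preprocessing, the critical path has length $3$, and $B,C$ are both forced into slot~$2$; with two agents the only spare unit slots are slot~$1$ and slot~$3$. The split $X_1,X_2$ fits ($X_1$ at slot~$1$, $X_2$ at slot~$3$), so two agents suffice for the transformed \DAG/. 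If $X$ must be executed atomically, the two spare slots are non-adjacent and a third agent is unavoidable. The paper quietly accepts exactly this phenomenon: right after introducing \textsc{ReshuffleSlot} it notes that a sub-action ``may be moved to the next slot, creating an interrupted schedule'' or that ``agents may collaborate'', leaving any stricter atomicity requirement to the user.

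Concretely, your within-slot exchange does bring all $N_i$ onto a single agent, but the subsequent ``compaction into $n$ contiguous slots'' is precisely where the plan fails --- and it cannot be repaired, because the stronger statement you are proving is not true. If you align ``equivalent'' with the paper's intended meaning (your invariant (a) plus the makespan half of (b)), the proposition is immediate; your invariant (c) should simply be dropped.
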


\begin{figure}[!!htb]
\centering
\begingroup
\def\nodesubtree{\node[isosceles triangle, isosceles triangle apex angle=70, shape border rotate=90, minimum size=6.5mm]}
\scalebox{0.8}{
	\begin{tikzpicture}[every node/.style={ultra thick,draw=red},
			node distance=1.5cm]
		\node[and gate US,point up,logic gate inputs=nn] (N')
		{\rotatebox{-90}{$N'$}};
		\node[gateSEQ,above of= N'] (N1) {$N_1$};
		\node[gateSEQ,above of= N1] (N2) {$N_2$};
		\nodesubtree[below of= N',xshift=-1cm] (l') {lt};
		\nodesubtree[below of= N',xshift=1cm] (r') {rt};
		\draw(N'.input 1) -- ([yshift=-3mm]N'.input 1) -| (l');
		\draw(N'.input 2) -- ([yshift=-3mm]N'.input 2) -| (r');
		\draw (N') -- (N1) --  (N2); 

		\node[and gate US,point up,logic gate inputs=nn,
			above of=N',node distance=4cm] (N)
		{\rotatebox{-90}{$N$}};
		\nodesubtree[below of= N,xshift=-1cm] (l) {lt};
		\nodesubtree[below of= N,xshift=1cm] (r) {rt};
		\draw(N.input 1) -- ([yshift=-3mm]N.input 1) -| (l);
		\draw(N.input 2) -- ([yshift=-3mm]N.input 2) -| (r);
		\node [above of= N,xshift=-1.5cm,yshift=1.2cm,text width=5cm,align=justify,scale=1/0.8,draw=none]
		{
			\begin{example}
				\Cref{fig:pre:time:ex} shows the transformation of an \gateAND node $N$ with duration
				$t_N=2\tunit{}$. Both $N_2$ and $N_1$ are of duration
				$\tunit{}$, while $N'$ has a null duration.
			\end{example}
		};
	\end{tikzpicture}
}
\endgroup

\caption{Normalising \gateAND node $N$ ($t_N=2 \tunit{}$, $t_{N'}=0$)\label{fig:pre:time:ex}}
\end{figure}




\subsection{Scheduling enforcement}
\label{sec:pre:sched}
Sequential nodes \gateSAND 
enforce some scheduling. 
These are transformed into a sequence containing their subtrees and \gateNULL nodes.

\begin{proposition}
	\label{prop:pre:sand}
	Any \gateSAND node $N$ with children subtrees $T_1$, \ldots, $T_n$ can be replaced
	with an equivalent sequence $T_1$, $N_1$, $T_2$, \ldots, $N_{n-1}$, $T_n$, $N_n$, where
	each $N_i$ is a \gateNULL node, its input is the output of $T_i$ and its outputs
	are the leaves of $T_{i+1}$ (except for $N_n$ which has the same output as $N$ if
	any).
\end{proposition}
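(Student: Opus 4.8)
The plan is to prove equivalence in two layers --- \emph{feasibility} (the Boolean success/failure semantics of the construct) and \emph{scheduling} (that the admissible agent assignments and their makespans coincide) --- and then to observe that both facts are \emph{local}, so that substituting the gadget for $N$ inside any larger \ADT/ preserves equivalence. Throughout I read ``$T_i$ is completed'' as ``the root of $T_i$ has produced its output'', and I use Proposition~\ref{prop:pre:zero}: a \gateNULL{} node, having duration $0$, costs neither time nor an extra agent, since it can be placed in the time slot of an adjacent node.

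First I would check that the gadget is well formed. The subtrees $T_1,\dots,T_n$, being the children subtrees of a single \ADT/ gate, are pairwise disjoint; every new edge runs from the block of $T_i$ (or from the fresh node $N_i$) into the block of $T_{i+1}$; and $N_n$ merely inherits the at most one outgoing edge that $N$ had. Hence no cycle is created, and the \emph{interface} of the gadget --- the leaves of $T_1,\dots,T_n$ that it consumes and the single output it produces --- is exactly that of $N$. This is what makes the replacement a congruence: from the outside the gadget is indistinguishable from $N$ as soon as it is shown to produce its output under the same conditions and with the same timing.

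For feasibility: in the original, $N$ succeeds iff every $T_i$ succeeds and, moreover, $T_{i+1}$ may not be processed before $T_i$ is completed. In the gadget, $N_i$ is a \gateNULL{} node whose only input is the output of $T_i$, so $N_i$ fires exactly when $T_i$ completes, i.e.\ succeeds; and since every leaf of $T_{i+1}$ now has $N_i$ as its input, no action of $T_{i+1}$ can begin before $N_i$ fires. Both the conjunction and the ordering required by \gateSAND{} are therefore enforced, and a routine induction on $i$ shows that $N_n$ fires --- hence the gadget produces $N$'s former output --- iff $T_1,\dots,T_n$ all succeed in order, which is precisely when $N$ succeeds. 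Since the internal structure of every $T_i$ is untouched, the values of any attributes and the truth of any conditions occurring inside the $T_i$ (including counter-defence subtrees) are unaffected.

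For scheduling: from an admissible schedule of the original I obtain one of the gadget by leaving every action of every $T_i$ in its slot and placing each $N_i$ in the slot of the last-finishing action of $T_i$ (legal by Proposition~\ref{prop:pre:zero}); all gadget precedence edges then hold, because the \gateSAND{} ordering already forced $T_{i+1}$ to start no earlier than $T_i$ finished. Conversely, deleting the zero-duration nodes $N_i$ from any admissible schedule of the gadget yields an admissible schedule of the original, as the edges through $N_i$ force every action of $T_{i+1}$ strictly after all of $T_i$. This correspondence preserves the set of agents used and the makespan; in particular, on both sides the minimal makespan equals the sum of the minimal makespans of $T_1,\dots,T_n$ and the minimal number of agents agrees. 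The step needing the most care is the unconventional requirement that ``$N_i$'s outputs are the leaves of $T_{i+1}$'': one has to check that this acts purely as an earliest-start/precedence constraint and does not alter how the leaves of $T_{i+1}$ --- or the gates, conditions and counter-defences above them within $T_{i+1}$ --- evaluate, so that ``the subtrees are untouched'' is genuinely argued rather than assumed.
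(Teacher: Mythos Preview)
The paper does not give a proof of this proposition: it states the transformation, follows it with Example~2 and \Cref{fig:pre:sand:ex}, and moves on, treating correctness as self-evident from the picture. Your write-up is therefore substantially more detailed than anything the authors provide.

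Your two-layer decomposition (Boolean feasibility, then schedule bijection) is the right structure, and the schedule correspondence --- place each $N_i$ in the slot of the last-finishing action of $T_i$ via Proposition~\ref{prop:pre:zero}, and conversely delete the $N_i$ --- is exactly the argument one would want. The observation that the gadget's interface coincides with that of $N$, making the replacement a congruence, is also a point the paper leaves entirely implicit. One small remark: your sentence ``the minimal makespan equals the sum of the minimal makespans of $T_1,\dots,T_n$'' silently assumes $N$ itself contributes zero duration; this is indeed the case in the paper's pipeline because time normalisation (Proposition~\ref{prop:pre:time}) is applied \emph{before} scheduling enforcement, but it would not hurt to say so.
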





\begingroup

\begin{figure}[!!htb]
\centering
\begingroup
\def\nodesubtree{\node[isosceles triangle, isosceles triangle apex angle=70, shape border rotate=90, minimum size=6.5mm]}

\vspace*{-0.85\baselineskip}

\scalebox{0.57}{
	\begin{tikzpicture}
		[every node/.style={ultra thick,draw=red,minimum size=6mm},
			node distance=1cm]
		\normalsize
		\node[and gate US,point up,logic gate inputs=nnn, seq=4pt] (N)
		{\rotatebox{-90}{$N$}};
		\node[and gate US,point up,logic gate inputs=nnn,
			left of=N,node distance=2cm] (B)
		{\rotatebox{-90}{$B$}};
		\node[state,below of=B,node distance=1.5cm] (b2) {$b_2$};
		\node[state,left of=b2] (b1) {$b_1$};
		\node[state,right of=b2] (b3) {$b_3$};
		\draw (b1.north) -- ([yshift=1.5mm]b1.north) -| (B.input 1);
		\draw (b2.north) -- (B.input 2);
		\draw (b3.north) -- ([yshift=1.5mm]b3.north) -| (B.input 3);
		\node[and gate US,point up,logic gate inputs=nnn,
			above of=B, node distance=3.5cm] (A)
		{\rotatebox{-90}{$A$}};
		\node[state,below of=A,node distance=1.5cm] (a2) {$a_2$};
		\node[state,left of=a2] (a1) {$a_1$};
		\node[state,right of=a2] (a3) {$a_3$};
		\draw (a1.north) -- ([yshift=1.5mm]a1.north) -| (A.input 1);
		\draw (a2.north) -- (A.input 2);
		\draw (a3.north) -- ([yshift=1.5mm]a3.north) -| (A.input 3);
		\node[and gate US,point up,logic gate inputs=nnn,
			below of=B, node distance=3.5cm] (C)
		{\rotatebox{-90}{$C$}};
		\node[state,below of=C,node distance=1.5cm] (c2) {$c_2$};
		\node[state,left of=c2] (c1) {$c_1$};
		\node[state,right of=c2] (c3) {$c_3$};
		\draw (c1.north) -- ([yshift=1.5mm]c1.north) -| (C.input 1);
		\draw (c2.north) -- (C.input 2);
		\draw (c3.north) -- ([yshift=1.5mm]c3.north) -| (C.input 3);
		\draw (A.east) -- ([yshift=1.5mm]A.east) -| (N.input 1);
		\draw (B.east) -- (N.input 2);
		\draw (C.east) -- ([yshift=1.5mm]C.east) -| (N.input 3);

		\node[and gate US,point up,logic gate inputs=nnn,
			below of=B, node distance=7cm] (A')
		{\rotatebox{-90}{$A$}};
		\node[state,below of=A',node distance=1.5cm] (a2') {$a_2$};
		\node[state,left of=a2'] (a1') {$a_1$};
		\node[state,right of=a2'] (a3') {$a_3$};
		\draw (a1'.north) -- ([yshift=1.5mm]a1'.north) -| (A'.input 1);
		\draw (a2'.north) -- (A'.input 2);
		\draw (a3'.north) -- ([yshift=1.5mm]a3'.north) -| (A'.input 3);
		\node[gateNULL,draw,above of=A',node distance=1.5cm] (N1) {$N_1$};
		\draw (N1) -- (A');
		\node[state,above of=N1,node distance=1.5cm] (b2') {$b_2$};
		\node[and gate US,point up,logic gate inputs=nnn,
			right of=b2',node distance=1.5cm] (B')
		{\rotatebox{-90}{$B$}};
		\node[state,left of=b2'] (b1') {$b_1$};
		\node[state,right of=b2'] (b3') {$b_3$};
		\draw (b1'.north) -- ([yshift=1.5mm]b1'.north) -| (B'.input 1);
		\draw (b2'.north) -- (B'.input 2);
		\draw (b3'.north) -- ([yshift=1.5mm]b3'.north) -| (B'.input 3);
		\draw (N1) -- (b1');
		\draw (N1) -- (b2');
		\draw (N1) -- (b3');
		\node[gateNULL,draw,above of=B',node distance=1.5cm] (N2) {$N_2$};
		\draw (N2) -- (B');
		\node[state,above of=N2,node distance=1.5cm] (c2') {$c_2$};
		\node[and gate US,point up,logic gate inputs=nnn,
			right of=c2',node distance=1.5cm] (C')
		{\rotatebox{-90}{$C$}};
		\node[state,left of=c2'] (c1') {$c_1$};
		\node[state,right of=c2'] (c3') {$c_3$};
		\draw (c1'.north) -- ([yshift=1.5mm]c1'.north) -| (C'.input 1);
		\draw (c2'.north) -- (C'.input 2);
		\draw (c3'.north) -- ([yshift=1.5mm]c3'.north) -| (C'.input 3);
		\draw (N2) -- (c1');
		\draw (N2) -- (c2');
		\draw (N2) -- (c3');
		\node[gateNULL,above of= C',node distance=1.5cm] (N3) {$N_3$};
		\draw (C') -- (N3);
		\node [above of= N,xshift=-1cm,yshift=4.5cm,text width=5.8cm,align=justify,scale=1/0.57,draw=none]
		{
			{
					\begin{example}
						Let us consider the \gateSAND{} node $N$ depicted on the left-hand
						side of \Cref{fig:pre:sand:ex},
						where all other nodes have already been processed by the time
						normalisation.
						The transformation of \Cref{prop:pre:sand} leads to the \DAG/ on the right-hand side
						of \Cref{fig:pre:sand:ex} where subtrees $A$, $B$ and $C$ occur in sequence, as imposed
						by the \gateNULL{} nodes $N_1$ and $N_2$ between them, and then action $N_3$ occurs.
					\end{example}
				}};
	\end{tikzpicture}
}
\endgroup
\caption{Normalising \gateSAND node $N$\label{fig:pre:sand:ex}}
\end{figure}

\endgroup


\subsection{Handling defences}
\label{sec:pre:def}
The scheduling we are seeking to obtain will guarantee the necessary attacks are performed.
Hence, when dealing with defence nodes, we can assume that all attacks are successful.
However, they may not be mandatory, in which case they should be avoided so as to
obtain a better scheduling of agents.

Taking into account each possible choice of defences will lead to as many \DAG/s representing
the attacks to be performed. This allows for answering the question: ``What is the
minimal schedule of attacker agents if these defences are operating?''



\emph{Composite defences.} Defences resulting from an \gateAND{}, \gateSAND
or \gateOR
between several defences are operating according to the success of their subtrees:
for \gateAND{} and \gateSAND{}, all subtrees should be operating, while only one is
necessary for \gateOR{}. This can easily be computed by a boolean bottom-up labelling
of nodes. Note that different choices of elementary defences can lead to disabling
the same higher-level composite defence, thus limiting the number of \DAG/s that will
need to be considered for the scheduling.

\emph{No Defence nodes (\gateNODEF).} 
A \gateNODEF succeeds if its attack succeeds or its defence fails. 
Hence, if the defence is not operating, the attack is not necessary.
Thus, the \gateNODEF{} node can be replaced by a \gateNULL{} node without children,
and the children subtrees 
deleted.
On the contrary, if the defence is operating, the attack must take place. The defence
subtree is deleted, while the attack one is kept, and the \gateNODEF{} node can be
replaced by a \gateNULL{} node, as 
pictured in \Cref{fig:pre:nodef}.

\emph{Counter Defence (\gateCAND{}) and Failed Reactive Defence (\gateSCAND
	{}) nodes.} A \gateCAND{} succeeds if its attack is successful and its defence is
not. A \gateSCAND{} behaves similarly but in a sequential fashion, \ie{} the defence
takes place after the attack. In both cases, if the defence is not operating, its
subtree is deleted, while the attack one is kept, and the \gateCAND{} (or \gateSCAND{})
node can be replaced by a \gateNULL{} node, as was the case in \Cref{fig:pre:nodef:dok}.
Otherwise, the \gateCAND{} (or \gateSCAND{}) node is deleted, as well as its subtrees.
Moreover, it transmits its failure recursively to its parents, until a choice of another
branch is possible. Thus, all ancestors are deleted bottom up until an \gateOR is
reached.


Thus, we have a set of \DAG/s with attack nodes only.

\begin{figure}[!!htb]
\centering
\begingroup
\def\nodesubtree{\node[isosceles triangle, isosceles triangle apex angle=70, shape border rotate=90, minimum size=6.5mm]}

\begin{subfigure}[b]{0.3\linewidth}
	\centering
	\scalebox{0.75}{
		\begin{tikzpicture}
			[every node/.style={ultra thick,draw=red,minimum size=6mm},
				node distance=1cm]
			\normalsize
			\node[or gate US,point up,logic gate inputs=ni] (A)
			{\rotatebox{-90}{$A$}};
			\nodesubtree             at (-1,-1.28) (a) {$a$};
			\nodesubtree[draw=Green] at ( 1,-1.24) (d) {$d$};
			\draw (a.north) -- ([yshift=1.4mm]a.north) -| (A.input 1);
			\draw (d.north) -- ([yshift=0.6mm]d.north) -| (A.input 2);
		\end{tikzpicture}
	}
	\subcaption{\gateNODEF{} node\label{fig:pre:nodef:gate}}
\end{subfigure}
\begin{subfigure}[b]{0.3\linewidth}
	\centering
	\scalebox{0.75}{
		\begin{tikzpicture}
			[every node/.style={ultra thick,draw=red,minimum size=6mm},
				node distance=1cm]
			\normalsize
			\node[draw=none] at (0,0) () {};
			\node[gateNULL] at (0,1.28) (A') {$A'$};
		\end{tikzpicture}
	}
	\subcaption{Case $d$ fails\label{fig:pre:nodef:dnok}}
\end{subfigure}
\begin{subfigure}[b]{0.3\linewidth}
	\centering
	\scalebox{0.75}{
		\begin{tikzpicture}
			[every node/.style={ultra thick,draw=red,minimum size=6mm},
				node distance=1cm]
			\normalsize
			\node[gateNULL] (A') {$A'$};
			\nodesubtree             at (0,-1.28) (a) {$a$};
			\draw (a.north) -- (A');
		\end{tikzpicture}
	}
	\subcaption{Case $d$ operates\label{fig:pre:nodef:dok}}
\end{subfigure}
\endgroup
\caption{Handling \gateNODEF $A$\label{fig:pre:nodef}}
\end{figure}
\subsection{Handling OR branches}
\label{sec:pre:or}
\gateOR nodes give the choice between several series of actions,
only one of which will be chosen in an optimal assignment of events.
However, one cannot simply keep the shortest branch of an \gateOR node and prune all others.
Doing so minimises attack time, but not necessarily the number of agents.
In particular, a slightly longer, but narrower branch may require fewer agents without increasing attack time,
provided there is a longer sequence elsewhere in the \DAG/.
Consequently, only branches that are guaranteed not to lead to an optimal assignment can be pruned,
which is the case when a branch is the longest one in the entire graph.
All other cases need to be investigated, leading to multiple variants depending on
the \gateOR branch executed, similar to the approach for defence nodes.

\subsection{Preprocessing the treasure hunters \ADT/}
\label{sec:pre:ex}

\Cref{fig:pre:treasure:1,fig:pre:treasure:2} detail the preprocessing of the treasure
hunters example step by step. The time unit is one minute. Long sequences of \gateSEQ
{} are shortened with dotted lines.
Note that when handling the defence, at step 3, we should obtain two \DAG/s corresponding
to the case where the defence fails (see \Cref{fig:pre:treasure:def}), or where the
defence is successful. This latter case leads to an empty \DAG/ where no attack can
succeed. Therefore, we can immediately conclude that if the police is successful,
there is no scheduling of agents.


\begin{figure}[!!htb]
	\centering
	\scalebox{0.5}{
		\begin{tikzpicture}[every node/.style={ultra thick,draw=red,minimum size=6mm},
				node distance=1.5cm]
			\node[and gate US,point up,logic gate inputs=ni] (ts)
			{\rotatebox{-90}{\gate{TS'}}};
			\node[gateSEQ,draw=Green,minimum size=8mm,
				below = 5mm of ts.west, xshift=22mm]
			(p10) {\leaf{p_{10}}};
			\node[gateSEQ,draw=Green,minimum size=8mm,below of= p10]
			(p1) {\leaf{p_{1}}};
			\node[rectangle,draw=Green,minimum size=8mm,below of= p1]
			(p) {\leaf{p'}};
			\draw (p10.north) -- ([yshift=0.28cm]p10.north) -| (ts.input 2);
			\draw[dotted] (p10) --(p1);
			\draw (p) -- (p1);
			\node[and gate US,point up,logic gate inputs=nn, seq=4pt,
				below = 9mm of ts.west, yshift=28mm] (tf)
			{\rotatebox{-90}{\gate{TF'}}};
			\draw (tf.east) -- ([yshift=0.15cm]tf.east) -| (ts.input 1);
			\node[gateSEQ,minimum size=8mm,
				left of = p1, node distance=6.7cm]
			(st2) {\leaf{ST_{2}}};
			\node[gateSEQ,minimum size=8mm,below of= st2]
			(st1) {\leaf{ST_{1}}};
			\node[and gate US,point up,logic gate inputs=nn,
				left of= st1]
			(st) {\rotatebox{-90}{\gate{ST'}}};
			\draw (st2.north) -- ([yshift=0.15cm]st2.north) -| (tf.input 1);
			\draw (st2) -- (st1) -- (st.east);
			\node[gateSEQ, below = 4mm of st.west, xshift=-1.2cm]
			(b60) {\leaf{b_{60}}};
			\node[gateSEQ,minimum size=8mm,below of= b60]
			(b1) {\leaf{b_{1}}};
			\node[state,minimum size=8mm,below of= b1]
			(b) {\leaf{b'}};
			\draw (b60.north) -- ([yshift=0.15cm]b60.north) -| (st.input 1);
			\draw[dotted] (b60) --(b1);
			\draw (b) -- (b1);
			\node[gateSEQ, below = 4mm of st.west, xshift=1.2cm]
			(f120) {\leaf{f_{120}}};
			\node[gateSEQ,minimum size=8mm,below of= f120]
			(f1) {\leaf{f_{1}}};
			\node[state,minimum size=8mm,below of= f1]
			(f) {\leaf{f'}};
			\draw (f120.north) -- ([yshift=0.15cm]f120.north) -| (st.input 2);
			\draw[dotted] (f120) --(f1);
			\draw (f) -- (f1);
			\node[or gate US,point up,logic gate inputs=nn,
				above of = p1, node distance=2.5cm]
			(ga) {\rotatebox{-90}{\gate{GA'}}};
			\draw (ga.east) -- ([yshift=0.15cm]ga.east) -| (tf.input 2);
			\node[gateSEQ, below = 4mm of ga.west, xshift=-1.2cm]
			(h3) {\leaf{h_{3}}};
			\node[gateSEQ,minimum size=8mm,below of= h3]
			(h2) {\leaf{h_{2}}};
			\node[gateSEQ,minimum size=8mm,below of= h2]
			(h1) {\leaf{h_{1}}};
			\node[state,minimum size=8mm,below of= h1]
			(h) {\leaf{h'}};
			\draw (h3.north) -- ([yshift=0.15cm]h3.north) -| (ga.input 1);
			\draw (h3) -- (h2) -- (h1) -- (h);
			\node[gateSEQ, below = 4mm of ga.west, xshift=1.2cm]
			(e10) {\leaf{e_{10}}};
			\node[gateSEQ,minimum size=8mm,below of= e10]
			(e1) {\leaf{e_{1}}};
			\node[state,minimum size=8mm,below of= e1]
			(e) {\leaf{e'}};
			\draw (e10.north) -- ([yshift=0.15cm]e10.north) -| (ga.input 2);
			\draw[dotted] (e10) --(e1);
			\draw (e) -- (e1);
		\end{tikzpicture}
	}
	\caption{Treasure hunters \ADT/: time normalisation\label{fig:pre:treasure:1}}
\end{figure}


\begin{figure}[!!htb]
	\captionsetup[subfigure]{justification=centering}
	\centering
	\begin{subfigure}[b]{0.35\linewidth}
		\centering
		\scalebox{0.5}{
			\begin{tikzpicture}[every node/.style={ultra thick,draw=red,minimum size=6mm},
					node distance=1.5cm]
				\node[and gate US,point up,logic gate inputs=ni] (ts)
				{\rotatebox{-90}{\gate{TS'}}};
				\node[gateSEQ,draw=Green,minimum size=8mm,
					below = 5mm of ts.west, xshift=10mm]
				(p10) {\leaf{p_{10}}};
				\node[gateSEQ,draw=Green,minimum size=8mm,below of= p10]
				(p1) {\leaf{p_{1}}};
				\node[rectangle,draw=Green,minimum size=8mm,below of= p1]
				(p) {\leaf{p'}};
				\draw (p10.north) -- ([yshift=0.28cm]p10.north) -| (ts.input 2);
				\draw[dotted] (p10) --(p1);
				\draw (p) -- (p1);
				\node[gateNULL,minimum size=8mm,yshift=1.7mm,
					left of= p10, node distance=2.5cm] (tf2)
				{\gate{TF'_2}};
				\draw (tf2.north) -- ([yshift=0.25cm]tf2.north) -| (ts.input 1);
				\node[or gate US,point up,logic gate inputs=nn,
					above of = p1, node distance=2.5cm]
				(ga) {\rotatebox{-90}{\gate{GA'}}};
				\draw (ga.east) -- ([yshift=0.15cm]ga.east) -| (tf2);
				\node[gateSEQ, below = 4mm of ga.west, xshift=-1.2cm]
				(h3) {\leaf{h_{3}}};
				\node[gateSEQ,minimum size=8mm,below of= h3]
				(h2) {\leaf{h_{2}}};
				\node[gateSEQ,minimum size=8mm,below of= h2]
				(h1) {\leaf{h_{1}}};
				\node[state,minimum size=8mm,below of= h1]
				(h) {\leaf{h'}};
				\draw (h3.north) -- ([yshift=0.15cm]h3.north) -| (ga.input 1);
				\draw (h3) -- (h2) -- (h1) -- (h);
				\node[gateSEQ, below = 4mm of ga.west, xshift=1.2cm]
				(e10) {\leaf{e_{10}}};
				\node[gateSEQ,minimum size=8mm,below of= e10]
				(e1) {\leaf{e_{1}}};
				\node[state,minimum size=8mm,below of= e1]
				(e) {\leaf{e'}};
				\draw (e10.north) -- ([yshift=0.15cm]e10.north) -| (ga.input 2);
				\draw[dotted] (e10) --(e1);
				\draw (e) -- (e1);
				\node[gateNULL,minimum size=8mm,xshift=1.2cm,
					below of= h]
				(tf1) {\gate{TF'_1}};
				\draw (tf1) -- (h);
				\draw (tf1) -- (e);
				\node[gateSEQ,minimum size=8mm,below of = tf1]
				(st2) {\leaf{ST_{2}}};
				\node[gateSEQ,minimum size=8mm,below of= st2]
				(st1) {\leaf{ST_{1}}};
				\node[and gate US,point up,logic gate inputs=nn,
					left of= st1]
				(st) {\rotatebox{-90}{\gate{ST'}}};
				\draw (st2) -- (tf1);
				\draw (st2) -- (st1) -- (st.east);
				\node[gateSEQ, below = 4mm of st.west, xshift=-1.2cm]
				(b60) {\leaf{b_{60}}};
				\node[gateSEQ,minimum size=8mm,below of= b60]
				(b1) {\leaf{b_{1}}};
				\node[state,minimum size=8mm,below of= b1]
				(b) {\leaf{b'}};
				\draw (b60.north) -- ([yshift=0.15cm]b60.north) -| (st.input 1);
				\draw[dotted] (b60) --(b1);
				\draw (b) -- (b1);
				\node[gateSEQ, below = 4mm of st.west, xshift=1.2cm]
				(f120) {\leaf{f_{120}}};
				\node[gateSEQ,minimum size=8mm,below of= f120]
				(f1) {\leaf{f_{1}}};
				\node[state,minimum size=8mm,below of= f1]
				(f) {\leaf{f'}};
				\draw (f120.north) -- ([yshift=0.15cm]f120.north) -| (st.input 2);
				\draw[dotted] (f120) --(f1);
				\draw (f) -- (f1);
			\end{tikzpicture}
		}
		\subcaption{Scheduling\\enforcement}
		\label{fig:pre:treasure:sched}
	\end{subfigure}
	\hfill
	\begin{subfigure}[b]{0.25\linewidth}
		\centering
		\scalebox{0.5}{
			\begin{tikzpicture}[every node/.style={ultra thick,draw=red,minimum size=6mm},
					node distance=1.5cm]
				\node[gateNULL,minimum size=8mm]
				(ts) {\gate{TS'}};
				\node[gateNULL,minimum size=8mm,below of= ts]
				(tf2) {\gate{TF'_2}};
				\draw (tf2) -- (ts);
				\node[or gate US,point up,logic gate inputs=nn,
					left of = tf2]
				(ga) {\rotatebox{-90}{\gate{GA'}}};
				\draw (ga.east) -- ([yshift=0.15cm]ga.east) -| (tf2);
				\node[gateSEQ, below = 4mm of ga.west, xshift=-1.2cm]
				(h3) {\leaf{h_{3}}};
				\node[gateSEQ,minimum size=8mm,below of= h3]
				(h2) {\leaf{h_{2}}};
				\node[gateSEQ,minimum size=8mm,below of= h2]
				(h1) {\leaf{h_{1}}};
				\node[state,minimum size=8mm,below of= h1]
				(h) {\leaf{h'}};
				\draw (h3.north) -- ([yshift=0.15cm]h3.north) -| (ga.input 1);
				\draw (h3) -- (h2) -- (h1) -- (h);
				\node[gateSEQ, below = 4mm of ga.west, xshift=1.2cm]
				(e10) {\leaf{e_{10}}};
				\node[gateSEQ,minimum size=8mm,below of= e10]
				(e1) {\leaf{e_{1}}};
				\node[state,minimum size=8mm,below of= e1]
				(e) {\leaf{e'}};
				\draw (e10.north) -- ([yshift=0.15cm]e10.north) -| (ga.input 2);
				\draw[dotted] (e10) --(e1);
				\draw (e) -- (e1);
				\node[gateNULL,minimum size=8mm,xshift=1.2cm,
					below of= h]
				(tf1) {\gate{TF'_1}};
				\draw (tf1) -- (h);
				\draw (tf1) -- (e);
				\node[gateSEQ,minimum size=8mm,below of = tf1]
				(st2) {\leaf{ST_{2}}};
				\node[gateSEQ,minimum size=8mm,below of= st2]
				(st1) {\leaf{ST_{1}}};
				\node[and gate US,point up,logic gate inputs=nn,
					left of= st1]
				(st) {\rotatebox{-90}{\gate{ST'}}};
				\draw (st2) -- (tf1);
				\draw (st2) -- (st1) -- (st.east);
				\node[gateSEQ, below = 4mm of st.west, xshift=-1.2cm]
				(b60) {\leaf{b_{60}}};
				\node[gateSEQ,minimum size=8mm,below of= b60]
				(b1) {\leaf{b_{1}}};
				\node[state,minimum size=8mm,below of= b1]
				(b) {\leaf{b'}};
				\draw (b60.north) -- ([yshift=0.15cm]b60.north) -| (st.input 1);
				\draw[dotted] (b60) --(b1);
				\draw (b) -- (b1);
				\node[gateSEQ, below = 4mm of st.west, xshift=1.2cm]
				(f120) {\leaf{f_{120}}};
				\node[gateSEQ,minimum size=8mm,below of= f120]
				(f1) {\leaf{f_{1}}};
				\node[state,minimum size=8mm,below of= f1]
				(f) {\leaf{f'}};
				\draw (f120.north) -- ([yshift=0.15cm]f120.north) -| (st.input 2);
				\draw[dotted] (f120) --(f1);
				\draw (f) -- (f1);
			\end{tikzpicture}
		}
		\subcaption{Handling\\failed defence}
		\label{fig:pre:treasure:def}
	\end{subfigure}
	\hfill
		\begin{subfigure}[b]{0.35\linewidth}
		\centering
		\scalebox{0.5}{
			\begin{tikzpicture}[every node/.style={ultra thick,
							draw=red,minimum size=6mm},
					node distance=1.5cm]
				\node[gateNULL,minimum size=8mm,
					depth label=125,
					label={[draw=none,green!60!black]120:depth},
					level label=0,
					label={[draw=none,blue]60:level}]
				(ts) {\gate{TS'}};
				\node[gateNULL,minimum size=8mm,below of= ts,
					depth label=125,level label=0]
				(tf2) {\gate{TF'_2}};
				\draw (tf2) -- (ts);
				\node[or gate US,point up,logic gate inputs=nn,
					left of = tf2]
				(ga) {\rotatebox{-90}{\gate{GA'}}};
				\node[draw=none,green!60!black,xshift=-0.7cm]
				at (ga.center) (){125};
				\node[draw=none,blue,xshift=0.6cm]
				at (ga.center) (){0};
				\draw (ga.east) -- ([yshift=0.15cm]ga.east) -| (tf2);
				\node[gateSEQ, below = of ga.center,yshift=0.5cm,
					depth label=125,level label=0]
				(h3) {\leaf{h_{3}}};
				\node[gateSEQ,minimum size=8mm,below of= h3,
					depth label=124,level label=1]
				(h2) {\leaf{h_{2}}};
				\node[gateSEQ,minimum size=8mm,below of= h2,
					depth label=123,level label=2]
				(h1) {\leaf{h_{1}}};
				\node[state,minimum size=8mm,below of= h1,
					depth label=122,level label=3]
				(h) {\leaf{h'}};
				\draw (h3.north) -- (ga);
				\draw (h3) -- (h2) -- (h1) -- (h);
				\node[gateNULL,minimum size=8mm,below of= h,
					depth label=122,level label=3]
				(tf1) {\gate{TF'_1}};
				\draw (tf1) -- (h);
				\node[gateSEQ,minimum size=8mm,below of = tf1,
					depth label=122,level label=3]
				(st2) {\leaf{ST_{2}}};
				\node[gateSEQ,minimum size=8mm,below of= st2,
					depth label=121,level label=4]
				(st1) {\leaf{ST_{1}}};
				\node[and gate US,point up,logic gate inputs=nn,
					left of= st1]
				(st) {\rotatebox{-90}{\gate{ST'}}};
				\node[draw=none,green!60!black,xshift=-0.8cm]
				at (st.center) (){120};
				\node[draw=none,blue,xshift=0.6cm]
				at (st.center) (){5};
				\draw (st2) -- (tf1);
				\draw (st2) -- (st1) -- (st.east);
				\node[gateSEQ, below = 4mm of st.west,xshift=-1.2cm,
					depth label=60,level label=5]
				(b60) {\leaf{b_{60}}};
				\node[gateSEQ,minimum size=8mm,below of= b60,
					depth label=1,level label=64]
				(b1) {\leaf{b_{1}}};
				\node[state,minimum size=8mm,below of= b1,
					depth label=0,level label=65]
				(b) {\leaf{b'}};
				\draw (b60.north) -- ([yshift=0.15cm]b60.north) -|
				(st.input 1);
				\draw[dotted] (b60) --(b1);
				\draw (b) -- (b1);
				\node[gateSEQ, below = 4mm of st.west,xshift=1.2cm,
					depth label=120,level label=5]
				(f120) {\leaf{f_{120}}};
				\node[gateSEQ,minimum size=8mm,below of= f120,
					depth label=1,level label=124]
				(f1) {\leaf{f_{1}}};
				\node[state,minimum size=8mm,below of= f1,
					depth label=0,level label=125]
				(f) {\leaf{f'}};
				\draw (f120.north) -- ([yshift=0.15cm]f120.north) -|
				(st.input 2);
				\draw[dotted] (f120) --(f1);
				\draw (f) -- (f1);
			\end{tikzpicture}
		}
		\subcaption{Applying Algs.~\ref{algo:depth}-\ref{algo:level}, handling \gateOR nodes}%
		\label{fig:sched:pruneNlabel}
	\end{subfigure}

	\caption{Treasure hunters \ADT/: final preprocessing steps (left, middle) and initial part of the main algorithm (right)\label{fig:pre:treasure:2}}
\end{figure}

\section{Best minimal agent assignment}
\label{sec:algo}






At this stage, we have \DAG/s where nodes are either (i) a leaf, or of type \gateAND
{}, \gateOR
{}, or \gateNULL, all with duration $0$ or (ii) of type \gateSEQ{} with duration
$\tunit{}$. Their branches mimic the possible runs in the system.

The algorithm's input is a set of \DAG/s preprocessed as described in \Cref{sec:preprocess},
corresponding to possible configurations of defence nodes' outcomes and choices of \gateOR branches in the original \ADT/.
%
For each of these \DAG/s, $n$ denotes the number of \gateSEQ{} nodes (all other ones have 0-duration).
Furthermore, nodes (denoted by \N) have some attributes: their $\type$; four integers
$\depth$, $\level$, $\agent$ and $\slot$, initially with value 0.
The values of $\depth$ and $\level$ denote, respectively,
the height of a node's tallest subtree and the distance from the root
(both without counting the zero duration nodes),
while $\agent$ and $\slot$ store a node's assignment in the schedule.

We first compute the nodes' depth in \Cref{sec:algo:depth}, then
compute the level of nodes in \Cref{sec:algo:level},
and finally compute an optimal scheduling in \Cref{sec:algo:algo}.

\subsection{Depth of nodes}
\label{sec:algo:depth}

Starting from the root, the procedure $\textsc{DepthNode}$ (\Cref{algo:depth})
explores the \DAG/ in a DFS (\emph{depth first search}) manner.
During backtracking, \ie starting from the leaves, $\depth$ is computed for the different types of nodes as follows:

{\bf\gateLEAF node:} After the time normalisation, a leaf node takes $0$ time.
	      It may still be an actual leaf,
	      and
	      its total duration is also 0 since
	      it has no children (not satisfying condition at l.~\ref{line:depth:leaf}). Or,
	      it may have a child
	      due to
	      scheduling enforcement,
	      and then its time is the same as the one of its only child (l.~\ref{line:depth:null}).

{\bf\gateSEQ node:} Its duration is one $\tunit{}$, which must be added to the duration
	      of its only child to obtain the minimum time of execution from the start
	      (l.~\ref{line:depth:seq}--\ref{line:depth:seq:op}).

{\bf\gateAND node:} All children 
		  must be completed before it
	      occurs. Therefore, its minimal time is the maximum one of all its children
	      (l.~\ref{line:depth:and}--\ref{line:depth:and:op}).

{\bf\gateOR node:} 
	      One child must complete for the \gateOR{} node to happen.
	      Its time is thus the minimal one of all its children (l.~\ref{line:depth:or}--
	      \ref{line:depth:or:op}).

{\bf\gateNULL node:} Note that, by construction, a \gateNULL{} node may have several
	      parents but a single child. Its duration being null, its time is the same as
	      the one of its only child (l.~\ref{line:depth:null}).

Note that the condition at l.~\ref{line:depth:continue} avoids a second exploration
of a node which has already been handled in another branch.


\begin{algorithm}[htb]
	\caption{\textsc{DepthNode}($\node$)}
	\label{algo:depth}
	\For{$\N \in \children(\node)$}{
		\lIf{$\N.depth = 0$\label{line:depth:continue}}{
			$\textsc{DepthNode}(\N)$}
	}
	\If{$\children(\node) \neq \emptyset$\label{line:depth:leaf}}{
		\uIf{$\node.\type=\gateSEQ$\label{line:depth:seq}}{
			$\node.\depth \gets \N.\depth+1$, \textbf{s.t.} $\{\N\} = \children(\node)$\label{line:depth:seq:op}}
		\uElseIf{$\node.\type=\gateAND$\label{line:depth:and}}{
			$\node.\depth \gets \max(\{\N.\depth \ |\ \N \in \children(\node)\})$\label{line:depth:and:op}}
		\uElseIf{$\node.\type=\gateOR$\label{line:depth:or}}{
			$\node.\depth \gets \min(\{\N.\depth \ |\ \N \in \children(\node)\})$\label{line:depth:or:op}}
		\lElse{
			$\node.\depth \gets \N.\depth$, \textbf{s.t.} $\{\N\} = \children(\node)$\label{line:depth:null}}
	}
\end{algorithm}

%
%
%
%
%
%

\subsection{Level of nodes}
\label{sec:algo:level}

Levels are assigned recursively, starting with the root, using a DFS.
The procedure $\textsc{LevelNode}$ (\Cref{algo:level}) computes nodes' levels.
It first assigns the node's level (l.~\ref{line:level:init}) according to the call
argument.
Note that in case of multiple parents (or ancestors with multiple parents), the longest path to the
root is kept.

\begin{algorithm}[htb]
	\caption{\textsc{LevelNode}($\node, l$)}
	\label{algo:level}
	$\node.\level \gets \max(l,\node.\level)$\label{line:level:init}

	\For{$\N \in \children(\node)$}{
		\lIf{$\node.\type=\gateSEQ$}{
			$\textsc{LevelNode}(\N, l+1)$}
		\lElse{
			$\textsc{LevelNode}(\N, l)$}
	}
\end{algorithm}

\subsection{Number of agents: upper and lower bounds}
\label{sec:algo:bounds}

The upper bound on the number of agents is obtained from the maximal width of the preprocessed \DAG/,
\ie the maximal number of \gateSEQ nodes assigned the same value of \textit{level}
(that must be executed in parallel to ensure minimum time).


The minimal attack time is obtained from the number of levels $l$ in the preprocessed \DAG/.
Note that the longest path from the root to a leaf has exactly $l$ nodes of non-zero duration.
Clearly, none of these nodes can be executed in parallel,
therefore the number of time slots cannot be smaller than $l$.
%
Thus, if an optimal schedule of $l\times\tunit{}$ is realisable,
the $n$ nodes must fit in a schedule containing $l$ time slots.
Hence, the lower bound on the number of agents is  $\lceil\frac{n}{l}\rceil$.
There is, however, no guarantee that it can be achieved, and introducing additional agents may be necessary
depending on the \DAG/ structure, \eg if there are many parallel leaves.

\subsection{Minimal schedule}
\label{sec:algo:algo}

%

The algorithm for obtaining a schedule with the minimal attack time and also minimising
the number of agents is given in \Cref{algo:minsched}.
Input \DAG/s are processed sequentially, a schedule returned for each one.
Not restricting the output to the overall minimum allows to avoid ``no attack'' scenarios
where the time is 0 (\eg following a defence failure on a root \gateNODEF node).
Furthermore, with information on the repartition of agents for a successful minimal time attack in all cases of defences,
the defender is able to decide which defences to enable according to these results
(and maybe the costs of defences).

\begin{algorithm}[t]
	\caption{\textsc{MinSchedule}($\DAGset$)}
	\label{algo:minsched}
	$\varOutput = \emptyset$

	\While{$\DAGset \neq \emptyset$}{
		Pick $\inputDAG \in \DAGset$

		\lIf(\Comment*[f]{Skip empty \DAG/s}){$\inputDAG.n=0$}{
		\textbf{continue}}
		$\textsc{DepthNode}(root(\inputDAG))$ \Comment*{Compute depth of nodes}
		$\inputDAG \gets \inputDAG\setminus\{\N \ |\ \neg
			\N.\keep\}$\label{line:minsched:delete}

		$\textsc{LevelNode}(root(\inputDAG),0)$ \Comment*{Compute level of nodes}

		$\varSlots \gets root(\inputDAG).\depth$\label{line:minsched:numslots}

	  $\varBound \gets \lceil\frac{\inputDAG.n}{\varSlots}\rceil - 1$

		$\varMaxAgents \gets \max_j(|\{\N: \N.\type=\gateSEQ \land \N.\level=j\}|)$ \Comment*{Max. level width (concur. \gateSEQ nodes)}

		$\varUpperBound \gets \varMaxAgents$

		$\varCurrentOutput = \emptyset$

		\While{$(\varUpperBound-\varBound>1)$\label{line:minsched:newloop}}{

			$\varNumAgents \gets \varBound + \lfloor\frac{\varUpperBound-\varBound}{2}\rfloor$\label{line:minsched:adjustagents}

			$(\varCandidate,\varNodesLeft) \gets \textsc{Schedule}(\inputDAG,\varSlots,\varNumAgents)$\label{line:minsched:candidate}

		  \If(\Comment*[f]{Candidate schedule OK}){$\varNodesLeft=0$\label{line:minsched:discard}}{

				$\varUpperBound \gets \varNumAgents$

				$\varCurrentOutput \gets \varCandidate$}

			\lElse(\Comment*[f]{Cand. schedule not OK}){$\varBound=\varNumAgents$}

		} 

		\If{$\varUpperBound = \varMaxAgents$}{
			$ (\varCurrentOutput, \_) \gets \textsc{Schedule}(\inputDAG,\varSlots,\varMaxAgents)$
		}\label{line:minsched:minagents}

		$\textsc{ZeroAssign}(\inputDAG)$\label{line:minsched:zero}

		$\varOutput \gets \varOutput \cup \varCurrentOutput$

		$\DAGset \gets \DAGset \setminus \inputDAG$
	}
	\Return $\varOutput$
\end{algorithm}


The actual computation of the schedule is handled by the function \textsc{Schedule} (\Cref{algo:schedCandidate}).
Starting from the root and going top-down, all \gateSEQ nodes at the current level
are added to set $\workingSet$. The other nodes at that level have a null duration
and can be scheduled afterwards with either a parent or child.
An additional check in l.~\ref{line:sched:discard} ensures that non-optimal variants
(whose longest branch exceeds a previously encountered minimum) are discarded without needlesly computing the schedule.
Nodes in $\workingSet{}$ are assigned an agent and time slot, prioritising those with
higher $\depth$ (\ie taller subtrees), as long as an agent is available.
Assigned nodes are removed from $\workingSet$, and any that remains (\eg when the bound was exceeded)
is carried over to the next level iteration.
Note that at this point it is possible for a parent and child node to be in $\workingSet$ concurrently,
but since higher $\depth$ takes precedence, they will never be scheduled in the wrong order.
In such cases, an extra check in the while loop avoids scheduling both nodes to be executed in parallel.

\begin{algorithm}[t]
	\caption{\textsc{Schedule}($\inputDAG,\varSlots,\varNumAgents$)}\label{algo:schedCandidate}

		$l \gets 0, \varSlot \gets \varSlots,\workingSet \gets \emptyset, \varNodesLeft \gets \inputDAG.n$	\label{line:sched:numslots}

		\While{$\varNodesLeft>0$ \textbf{and} $\varSlot > 0$\label{line:minsched:mainloop}}{
			$\varAgent \gets 1$



			$\workingSet \gets \workingSet \cup \{\N \ |\ \N.\type=\gateSEQ \land \N.\level=l\}$

			\If{$\exists_{\N \in \workingSet}$, \textbf{s.t.} $\N.\depth < \varSlots
			- \varSlot$\label{line:sched:discard}}{
                \Return $\emptyset, \varNodesLeft$
			}

			\While{$\varAgent \leq \varNumAgents$ \textbf{and} $\workingSet \neq \emptyset$ \textbf{and} \qquad\qquad\qquad\qquad
					(Pick $\N\in\workingSet$, \textbf{s.t.}
					$\forall_{\N' \in \workingSet} \N.\depth \geq \N'.\depth\, \land$ \qquad
					$\forall_{\N': \N'.\varSlot = \varSlot} \N' \notin ancestors(\N)) \neq \emptyset$\label{line:minsched:prioritise}}{

				$\N.\agent \gets \varAgent$

				$\N.\slot \gets \varSlot$

				$\varAgent \gets \varAgent+1,\varNodesLeft	\gets \varNodesLeft-1\label{line:minsched:parexec}$

				$\workingSet \gets \workingSet \setminus \{\N\}$
			}

			$\textsc{ReshuffleSlot}(\slot, \varAgent-1)$
			\label{line:minsched:shuffle}

			$l \gets l+1, \varSlot \gets \varSlot-1$
			}

	$output \gets \bigcup_{N\in\inputDAG} \{ (\N.\agent,\N.\slot) \}$

	\Return $output, \varNodesLeft$
\end{algorithm}


\Cref{algo:schedCandidate} calls function $\textsc{ReshuffleSlot}$ after the complete
assignment of a time slot at l.~\ref{line:minsched:shuffle}
to ensure consistent assignment of sub-actions of the same \ADT/ node.
Note that depending on $depth$, a sub-action may be moved to the next slot,
creating an interrupted schedule where an agent stops an action for one or more time units to handle another.
Alternatively, agents may collaborate, each handling a node's action for a part of its total duration.
Such assignments could be deemed unsuitable for specific scenarios, \eg defusing a bomb,
in which case manual reshuffling or adding extra agent(s) is left to the user's discretion.

At this point, either the upper or the lower bound on the number of agents is adjusted,
depending on whether the resulting schedule is valid (that is, there are no nodes left to assign at the end).
Scheduling is then repeated for these updated values until the minimal number of agents is found (\ie the two bounds are equal).

After the complete computation for a given \DAG/, l.~\ref{line:minsched:zero}
calls function $\textsc{ZeroAssign}$ in order to obtain assignments
for all remaining nodes,
\ie those of zero duration. Functions $\textsc{ReshuffleSlot}$ and $\textsc{ZeroAssign}$
are detailed in \Cref{sec:algo:reshuffle,sec:algo:zero}, respectively.


Although this algorithm assumes the minimal time is of interest,
it can be easily modified to increase the number of time slots,
thus synthesising the minimal number of agents required for a successful attack of any given duration.

\subsection{Uniform assignment for SEQ nodes}
\label{sec:algo:reshuffle}

A separate subprocedure, given in \Cref{algo:reshuffle}, swaps assigned agents between nodes at the same level
so that the same agent handles all \gateSEQ nodes in sequences obtained during the time normalisation step
(\ie corresponding to a single node in the original \ADT/).

\begin{algorithm}
	\caption{\textsc{ReshuffleSlot}($slot, num\_agents$)}
	\label{algo:reshuffle}
	\For{$agent \in \{1..num\_agents\}$}{
		$\varCurrentNode \gets \N$, \textbf{s.t.} $\N.\varAgent
		= \varAgent\land \N.\slot=\slot$

		$\varParentAgent \gets \parent(\varCurrentNode).\varAgent$

		\If{$\varParentAgent \neq \varAgent\land\varParentAgent\neq 0$\label{line:reshuffle:seqonly}}{
			\If{$\exists \N' \neq \varCurrentNode$, \textbf{s.t.}
				$\quad\N'.\agent = \varParentAgent\land \N'.\slot=\slot$}{
				$\N'.\agent \gets \varAgent$ \Comment*{Swap with $\N'$ if it exists}

				$\N'.\slot \gets \slot$
			}
			$\varCurrentNode.\agent \gets \varParentAgent$

			$\varCurrentNode.\slot \gets \slot$
		}
	}
\end{algorithm}

\begin{proposition}
	\label{prop:algo:reshuffle}
	Reshuffling the assignment by swapping the agents assigned to a pair of nodes in the same slot
	does not affect the correctness of the scheduling.
\end{proposition}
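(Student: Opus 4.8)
The plan is to show that swapping the agent labels of two nodes $\N_1$ and $\N_2$ occupying the same time slot $\varSlot$ preserves every property that makes a schedule valid, so that applying such a swap (as \textsc{ReshuffleSlot} does) cannot turn a correct schedule into an incorrect one. A schedule is correct precisely when (i) no agent is assigned two distinct nodes in the same slot, (ii) precedence is respected, i.e.\ for every edge $\N \to \ChildNode$ in the \DAG/ the child's slot is strictly earlier than the parent's (recalling that slots count down from $\varSlots$ toward the leaves), and (iii) every \gateSEQ node receives an agent and a slot. I would first observe that a swap changes only the $\agent$ field of $\N_1$ and $\N_2$, and leaves every $\slot$ field untouched; hence (iii) is trivially preserved, and (ii) is preserved because precedence constraints in this setting only ever refer to the $\slot$ values of nodes, never to which agent handles them.

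The only condition that could conceivably be broken is (i), the one-node-per-agent-per-slot invariant. So the key step is: before the swap, $\N_1.\agent = a_1$, $\N_1.\slot = \varSlot$ and $\N_2.\agent = a_2$, $\N_2.\slot = \varSlot$ with $a_1 \neq a_2$ (if $a_1 = a_2$ the two nodes would already violate (i), contradicting the assumption that we start from a correct schedule; and if $a_1 = a_2$ were somehow intended, the swap is the identity and there is nothing to prove). After the swap, $\N_1.\agent = a_2$ and $\N_2.\agent = a_1$. Since no other node's slot or agent changed, the set of (agent, slot) pairs used in slot $\varSlot$ is exactly $\{(a_1,\varSlot),(a_2,\varSlot)\} \cup R$ both before and after, where $R$ is the collection of pairs coming from the other nodes in that slot; the swap merely permutes which node sits on $(a_1,\varSlot)$ versus $(a_2,\varSlot)$. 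Hence no collision is created. One should also note that \textsc{ReshuffleSlot} only ever swaps within a single slot (both the current node and the $\N'$ it trades with satisfy $.\slot = \slot$), so the argument applies verbatim to the actual reshuffling performed.

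The main obstacle—really the only subtlety—is making sure that the precedence invariant genuinely does not depend on agent identities, in particular that an interrupted or collaborative assignment (a sub-action pushed to the next slot, or a node split across agents, as discussed around \textsc{ReshuffleSlot}) is not silently relied upon elsewhere. I would handle this by appealing to the fact established in the description of \Cref{algo:schedCandidate} that higher-$\depth$ nodes are always scheduled first, so for any ancestor/descendant pair the descendant already occupies a strictly earlier slot, and this ordering is a statement about slots alone; swapping agents within a fixed slot cannot disturb it. A brief remark that \textsc{ReshuffleSlot} is invoked only after a slot is fully assigned—so the swap operates on a completed, already-correct slot—closes the argument. The conclusion is that the post-swap assignment is again a valid schedule with the same slot occupancy, hence the same attack time and the same number of agents, which is what the proposition claims.
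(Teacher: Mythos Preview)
Your argument is correct and follows essentially the same line as the paper's: since a swap touches only the $\agent$ fields of two nodes sharing the same $\slot$, precedence constraints (which depend only on slot values) are untouched and no agent collision is introduced. The paper's proof is terser and adds one small observation you omit, namely that the guard at l.~\ref{line:reshuffle:seqonly} ensures reshuffling only ever applies to \gateSEQ nodes (zero-duration nodes are assigned later by \textsc{ZeroAssign}), but this is not needed for your invariant-preservation argument to go through.
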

\begin{proof}
	First, note that the procedure does not affect nodes whose parents
	have not yet been assigned an agent (l. \ref{line:reshuffle:seqonly}).
	Hence, reshuffling only applies to \gateSEQ nodes (since the assignment
	of 0 duration nodes occurs later in the main algorithm \textsc{MinSchedule}).
	Furthermore, 
	changes are restricted to pairs of nodes in the same time slot,
	so swapping assigned agents between them cannot break the 
	execution order
	and does not affect the schedule correctness.
\end{proof}

\subsection{Assigning nodes without duration}
\label{sec:algo:zero}


After all non-zero duration nodes have been assigned and possibly reshuffled at each
level,
\Cref{algo:zero} handles the remaining nodes.

\begin{algorithm}
	\caption{\textsc{ZeroAssign}($\inputDAG$)}
	\label{algo:zero}
		$\workingSet{}\gets\{\N\ |\ \N.\agent=0\}$ \Comment*
		{Nodes not assigned yet}
		\For{$\node{}\in\workingSet{}$\label{line:zero:seq:start}}{
			\If{$\N\in\parent(\node{})\land \N.\type=\gateSEQ{}$}{
				$\node.\agent\gets \N.\agent$

				$\node.\slot\gets \N.\slot$

				$\workingSet{}\gets\workingSet{}\setminus\{\node\}$
			}
		\label{line:zero:seq:end}}
		\While{$\workingSet{}\neq\emptyset$}{
			\For{$\node{}\in\workingSet{}$ \textbf{s.t.} $node.type \in \{\gateNULL,
			\gateOR, \gateLEAF\}$\label{line:zero:nullorleaf:start}}{
				\If{$\N.\agent\neq 0$ \textbf{s.t.} $\N\in\children(\node{})$}{
					$\node.\agent\gets \N.\agent$

					$\node.\slot\gets \N.\slot$

					$\workingSet{}\gets\workingSet{}\setminus\{\node\}$
				}
				\If{$(\children(\node)=\emptyset$\\
						$\lor (\N.\depth=0$ \textbf{s.t.}
						$\N\in\children(\node)))$}{
					$\varParentNode\gets\N\in\parent(\node)$ \textbf{s.t.} \mbox{\quad}$\forall_
					{\N'\in \parent(\node{})} \N.\slot\leq\N'.\slot$

					\If{$\varParentNode.agent\neq 0$}{
						$\node.\agent\gets \varParentNode.\agent$

						$\node.\slot\gets \varParentNode.\slot$

						$\workingSet{}\gets\workingSet{}\setminus\{\node\}$
					}
				}
			\label{line:zero:nullorleaf:end}}
			\For{$\node{}\in\workingSet{}$ \textbf{s.t.} $\node{}.\type=\gateAND$\label{line:zero:and:start}}{
				\If{$\node.\depth=0\land\parent(\node).agent\neq
						0$}{
					$\node.\agent\gets \parent(\node).\agent$

					$\node.\slot\gets \parent(\node).\slot$

					$\workingSet{}\gets\workingSet{}\setminus\{\node\}$
				}
				\If{$\node.\depth\neq 0$\\
					$\land\forall_{\N\in\children(\node)}
				(\N.\agent\neq 0\lor\N.\depth=0)$}{
					$\varChildNode{}\gets\N\in\children(\node)$ \textbf{s.t.}
					\mbox{\quad}$\forall_{\N'\in \children
						(\node{})} \N.\slot\geq\N'.\slot$

					$\node.\agent\gets \varChildNode.\agent$

					$\node.\slot\gets \varChildNode.\slot$

					$\workingSet{}\gets\workingSet{}\setminus\{\node\}$
				}
					\label{line:zero:and:end}}
		}
\end{algorithm}

Our choice here stems from the \ADT/ gate the node originates from.
We first assign zero-duration nodes to the same agent and the same time
slot as their parent if the parent is a \gateSEQ node (l.~\ref{line:zero:seq:start}--\ref{line:zero:seq:end}).

Of the remaining ones, nodes of type \gateNULL, \gateOR and \gateLEAF get the same
assignment as their only child if any, or as their parent if they have no child
(l.~\ref{line:zero:nullorleaf:start}--\ref{line:zero:nullorleaf:end}). The latter case
may happen for \gateNULL{} when handling defences as in \eg{}
\Cref{fig:pre:nodef:dnok}, and for \gateLEAF{} nodes with originally a null duration.
\gateAND nodes are assigned the same agent and time slot as the child that occurs
last (l.~\ref{line:zero:and:start}--\ref{line:zero:and:end}).

Note that in all cases the agents (and time slots) assigned to zero duration nodes are the same as those
of their immediate parents or children. Hence, no further reshuffling is necessary.

%

\begin{proposition}
	\label{prop:algo:zero}
	Adding nodes of zero duration to the assignment in \Cref{algo:zero} does not affect
	the correctness of the scheduling.
\end{proposition}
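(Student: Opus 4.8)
The plan is to show that every node of zero duration handled in \Cref{algo:zero} inherits the agent/slot of an adjacent (parent or child) node that has already received a valid assignment, and that this inheritance never violates the two invariants a correct schedule must satisfy: (i) no agent executes two distinct non-zero-duration actions in the same slot, and (ii) a node is scheduled no earlier than all its predecessors in the \DAG/ (respecting the edge orientation produced by preprocessing). Since zero-duration nodes cost no time, placing one of them in the same slot as a neighbour cannot consume an agent's capacity in that slot beyond what the neighbouring non-zero node already consumes; so invariant (i) is untouched. The work is therefore entirely in checking that each of the three assignment rules in \Cref{algo:zero} preserves the ordering invariant (ii).

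First I would fix the ordering convention: after preprocessing, edges run from children (lower \level, executed earlier) up to parents, so a correct schedule requires $\N.\slot \le \mathit{child}.\slot$ would be reversed — I would restate it cleanly as: for every edge, the earlier-executed endpoint has the larger $\varSlot$ index (slots count down from $\varSlots$ to $1$ in \textsc{Schedule}), and then argue each branch of \Cref{algo:zero} against that. Branch 1 (l.~\ref{line:zero:seq:start}--\ref{line:zero:seq:end}): a zero-duration node whose parent is a \gateSEQ node copies that parent's slot; since the parent is the unique node immediately above and the zero node has duration $0$, it legitimately occupies the boundary between the parent's slot and the next, so no ordering edge is crossed. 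Branch 2 (l.~\ref{line:zero:nullorleaf:start}--\ref{line:zero:nullorleaf:end}): a \gateNULL, \gateOR, or \gateLEAF node copies its unique child's assignment when it has one, else its earliest-scheduled parent's; here I would invoke \Cref{prop:pre:zero}, which already states exactly that zero-duration nodes can be scheduled immediately before their parent or after their last child in the same slot — so this branch is a direct instantiation, and the "no child" case (arising for \gateNULL after a \gateNODEF failure, per \Cref{fig:pre:nodef:dnok}, and for originally-null leaves) is covered by the "before the parent" half of that proposition. Branch 3 (l.~\ref{line:zero:and:start}--\ref{line:zero:and:end}): an \gateAND node copies the assignment of its last-occurring child (smallest slot index among children); since all children of an \gateAND must complete before it, scheduling it together with the latest child respects every child-edge simultaneously, and the parent-edge case for $\depth = 0$ \gateAND nodes is again the "before the parent" half of \Cref{prop:pre:zero}.

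The one genuinely delicate point — and the step I expect to be the main obstacle — is the order in which \Cref{algo:zero} resolves nodes: a zero-duration node may be adjacent only to other zero-duration nodes that are themselves not yet assigned (chains of \gateNULL/\gateAND/\gateOR created by \gateSAND normalisation and defence handling can be arbitrarily long), so I must show the outer \texttt{while} loop actually terminates with $\workingSet = \emptyset$. I would argue this by induction on distance (in zero-duration edges) to the nearest non-zero-duration node or to the root/leaves: every zero node lies on such a finite path because between any two \gateSEQ nodes on a branch there are only finitely many zero nodes, and each pass of the \texttt{for} loops assigns at least every zero node all of whose relevant neighbours are already assigned, so the minimal unassigned distance strictly decreases each pass. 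Termination plus the per-branch checks above, combined with the closing remark that assigned agents/slots of zero nodes coincide with those of immediate neighbours (hence no reshuffle is needed and \Cref{prop:algo:reshuffle} is not re-triggered), gives correctness. Finally I would note that adding zero-duration nodes changes neither $\varSlots$ nor the set of \gateSEQ nodes, so the minimal time $l \times \tunit$ and the agent count established in \Cref{sec:algo:bounds,sec:algo:algo} are unaffected, completing the proof.
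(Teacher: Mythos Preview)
Your approach is essentially the paper's: a case-by-case walk through the three branches of \Cref{algo:zero}, arguing in each that copying a neighbour's agent/slot preserves the execution order and adds no resources. You are somewhat more explicit than the paper in stating the two invariants and in tackling termination of the outer \texttt{while} loop, which the paper leaves implicit.

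One concrete slip: you have the slot convention reversed. In \textsc{Schedule} the loop variable $\varSlot$ starts at $\varSlots$ and counts down while $l$ counts up from the root, so the root (executed last) receives the \emph{largest} slot number and leaves receive slot~$1$; earlier-executed nodes therefore have \emph{smaller} slot indices, not larger. Consequently, in your Branch~3 the \gateAND node copies the child with the \emph{largest} slot (the one that occurs last), exactly as the algorithm's condition $\N.\slot \geq \N'.\slot$ and the paper's wording ``biggest time slot'' say. Fixing this sign throughout makes your argument match the paper's; the structure of your proof is otherwise sound.
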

\begin{proof}
	Since all remaining nodes have zero duration, no extra agents or time slots are necessary.
	In all cases, the zero duration node is assigned with either its immediate parent or child,
	preserving the correct execution order.
	Consider possible cases at each step of the algorithm:
	\begin{itemize}
		\item l.~\ref{line:zero:seq:start}--\ref{line:zero:seq:end}:
		      Nodes with a \gateSEQ parent are the
		      final nodes of sequences obtained during time normalisation.
		      Clearly, they can be assigned the same agent and time slot as their immediate parent without affecting the schedule.
		\item l.~\ref{line:zero:nullorleaf:start}--\ref{line:zero:nullorleaf:end}:
			      \gateOR nodes: in each \DAG/ variant (see \Cref{sec:pre:or}),
			            they
			            are guaranteed to have a single child node
			            and can be scheduled together with this child provided the corresponding
			            sub-\DAG/ has some duration.

			      \gateNULL and \gateLEAF{} nodes have a single child if any and are handled
			            analogously
			            to \gateOR, being assigned the same agent as their child. Note that \gateLEAF
			            {} nodes can have gotten this child during \eg{} the scheduling enforcement
			            step (see \Cref{sec:pre:sched}).

			      \gateOR, \gateNULL and \gateLEAF{} nodes with no child
			            or a child sub-\DAG/ with no duration
			            are assigned as their parent. If a \gateNULL{} node has
			            several parents due to sequence enforcement, it
			            gets the same assignment as its parent that occurs first.
		\item l.~\ref{line:zero:and:start}--\ref{line:zero:and:end}:
			      In case all children are never able to get an assignment, \ie{} they
			            are subtrees of null duration and can be identified with a depth 0, the \gateAND
			            {} node gets the same assignment as its parent.

			      Otherwise, \gateAND nodes are also scheduled together with
			            one
			            of their children.
			            Note that the \gateAND condition is satisfied only if all its longest children
			            have completed, therefore the one that occurs last, \ie{} has the biggest time
			            slot, is chosen (l.~\ref{line:zero:and:start}--\ref{line:zero:and:end}).
			            Furthermore, note that since children subtrees with a null
			            duration are discarded,
			            such children of the \gateAND node have
			            already been assigned an agent at that point.
	\end{itemize}
	The pathological case of a full \ADT/ with no duration is not handled since the
	algorithm is not called for such \DAG/s.
\end{proof}

\subsection{Complexity and correctness}
\label{sec:algo:proof}


We now consider the algorithm's complexity and prove that it achieves its intended goal.

\begin{proposition}
	\Cref{algo:minsched} is in $\mathcal{O}(kn^2\log n)$, where $k$ is
	the number of input \DAG/ variants, and $n$ their average number of nodes.
	\begin{proof}
		Initially, \textsc{DepthNode}, and \textsc{LevelNode}
		each visit all \DAG/ nodes, hence $2n$ operations. 
		In \textsc{Schedule}, the outer while loop (l. \ref{line:minsched:mainloop}) iterates over nodes of non-zero duration;
		its inner loop and \textsc{ReshuffleSlot} both operate within a single time slot.
		Overapproximating these numbers by $n$ puts the function at no more than $n^2$ operations.
		The schedule computation is repeated at most $\log n$ times in a divide-and-conquer strategy (l. \ref{line:minsched:newloop}).

		Finally, \textsc{ZeroAssign} visits all zero duration nodes (again overapproximated by $n$),
		performing at most $2n$ iterations for each, for a total of $2n^2$.
		Thus, the complexity of processing a single \DAG/ is $\mathcal{O}(2n + n^2\log n + 2n^2) = \mathcal{O}(n^2\log n)$,
		and $\mathcal{O}(kn^2\log n)$ for the whole input set.
		
		Note that as per \Cref{sec:preprocess}, the preprocessing step introduces a number of additional nodes in resulting \DAG/s.
		However, since that factor depends on the structure and attributes of the original \ADT/ rather than its size,
		it is treated as a constant in the consideration of complexity.
	\end{proof}
\end{proposition}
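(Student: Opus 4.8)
The plan is to decompose the per-\DAG/ cost of \textsc{MinSchedule} (\Cref{algo:minsched}) into the costs of its constituent subroutines, bound each of them separately, and then multiply by the number $k$ of input variants. First I would dispatch the cheap parts: \textsc{DepthNode} (\Cref{algo:depth}) and \textsc{LevelNode} (\Cref{algo:level}) are each depth-first traversals in which the guard at l.~\ref{line:depth:continue} (resp. the $\max$ at l.~\ref{line:level:init}) prevents any node from being reprocessed across branches, so each touches every node a constant number of times and runs in $\mathcal{O}(n)$; likewise, the node deletion at l.~\ref{line:minsched:delete} and the computation of the bounds \varBound, \varUpperBound and of \varMaxAgents are single sweeps over the \DAG/, hence $\mathcal{O}(n)$.

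Next I would bound one call to \textsc{Schedule} (\Cref{algo:schedCandidate}) by $\mathcal{O}(n^2)$. The key observation is that the outer \textbf{while} loop (l.~\ref{line:minsched:mainloop}) executes at most once per time slot, and there are at most $n$ slots; moreover each iteration of the inner \textbf{while} loop (l.~\ref{line:minsched:prioritise}) assigns and removes exactly one node from \workingSet{} and decrements \varNodesLeft, so the total number of inner iterations, summed over all slots, is at most $n$. Within one slot the work is: inserting the \gateSEQ{} nodes of the current level, the discard check at l.~\ref{line:sched:discard}, the priority selection of a node of maximal \depth{} (a linear scan over \workingSet{}, overapproximated by $n$), and the call to \textsc{ReshuffleSlot} (\Cref{algo:reshuffle}), which touches only the $\le n$ nodes sharing that slot. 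Charging $\mathcal{O}(n)$ to each of the $\le n$ slots, plus the telescoping $\mathcal{O}(n)$ total inner iterations each costing $\mathcal{O}(n)$ for the selection, yields $\mathcal{O}(n^2)$ for the whole call. The bisection loop at l.~\ref{line:minsched:newloop} then repeats this at most $\mathcal{O}(\log n)$ times, since the interval $[\varBound,\varUpperBound]$ has initial width $\varMaxAgents \le n$ and each pass replaces one endpoint by the midpoint, halving it. Hence the scheduling phase for one \DAG/ costs $\mathcal{O}(n^2\log n)$.

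It then remains to bound \textsc{ZeroAssign} (\Cref{algo:zero}), called once per \DAG/ at l.~\ref{line:minsched:zero}: \workingSet{} contains at most $n$ zero-duration nodes, and the outer \textbf{while} loop makes progress on every pass — each pass either empties \workingSet{} or assigns at least one node, since otherwise some zero-duration node would be waiting on a cyclic chain of zero-duration nodes, contradicting acyclicity — so it runs at most $n$ times, each pass scanning \workingSet{} a constant number of times, for $\mathcal{O}(n^2)$ in total. Summing, the cost for one \DAG/ is $\mathcal{O}(n)+\mathcal{O}(n)+\mathcal{O}(n^2\log n)+\mathcal{O}(n^2)=\mathcal{O}(n^2\log n)$, and, since a sum of $k$ terms each $\mathcal{O}(n_i^2\log n_i)$ is $\mathcal{O}(kn^2\log n)$ with $n$ the average number of nodes, the bound over the whole input set is $\mathcal{O}(kn^2\log n)$. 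Finally I would address the caveat that the preprocessing of \Cref{sec:preprocess} inflates each \DAG/: the blow-up factor depends on the time attributes and gate structure of the original \ADT/ rather than on its node count, so it is absorbed into the constant. The subtle points I expect to fight with are the telescoping argument that keeps \textsc{Schedule} at $\mathcal{O}(n^2)$ rather than $\mathcal{O}(n^3)$ — one must genuinely use that each node is assigned exactly once, not merely that every loop is ``at most $n$'' — and the progress/termination argument for the \textbf{while} loop of \textsc{ZeroAssign}.
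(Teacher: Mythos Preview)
Your proposal is correct and follows essentially the same decomposition as the paper's proof: linear cost for \textsc{DepthNode}/\textsc{LevelNode}, $\mathcal{O}(n^2)$ for one call to \textsc{Schedule}, a $\log n$ factor from the bisection at l.~\ref{line:minsched:newloop}, and $\mathcal{O}(n^2)$ for \textsc{ZeroAssign}, then multiplication by $k$. Your write-up is in fact more careful than the paper's sketch—you supply the telescoping argument for the inner loop of \textsc{Schedule} and the progress argument for the outer \textbf{while} of \textsc{ZeroAssign} that the paper leaves implicit.
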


Thus, while the complexity of the scheduling algorithm itself is quadratic,
it is executed for $k$ input \DAG/ variants, where $k$ is exponential
in the number of $\gateOR$ and defence nodes in the \ADT/.

\begin{proposition}
	The assignments returned by \Cref{algo:minsched} are correct and use the minimal number of agents
	for each variant $\inputDAG \in \DAGset$ to achieve the attack in	minimal time.
	\begin{proof}
		Let $\numLevels$ denote the number of levels in an input variant $\inputDAG \in \DAGset$,
		and $L_i$ the set of nodes at the $i$-th level.
		We need to show that the resulting assignment is 1)~\emph{correct}, and 2)~\emph{optimal} in both schedule length and number of agents.

		1) \textsc{Schedule} assigns time slot~$1$ to leaves at the bottom level,
		subsequent slots to their ancestors, and finally the last one $\numLevels$ to the root node.
		Thus, the execution order of nodes in $\inputDAG$ is correct.
		Furthermore, it is guaranteed that there are enough agents to handle all nodes
		by increasing $\varNumAgents$ accordingly after an invalid assignment with unassigned nodes is discarded (l.~\ref{line:minsched:adjustagents}),
		and that any nodes executed in parallel (\ie at the same level) are assigned to different agents (l.~\ref{line:minsched:parexec}).
		Note also that the while loop at l.~\ref{line:minsched:newloop} of \textsc{MinSchedule} is guaranteed to terminate
		as the value of $\varNumAgents$ is refined from its theoretical bounds in a divide-and-conquer strategy.

		2) Since the number of time slots is fixed at $L$ (\Cref{algo:minsched},~l.~\ref{line:minsched:numslots}),
		\ie the minimal value that follows directly from the structure of $\inputDAG$ as its longest branch (note that $\numLevels = root(\inputDAG).depth$),
		it follows that the total attack time $\numLevels \times \tunit$ is always minimal.

		To show that the number of agents is also minimal, consider the assignment of nodes at each level $L_i$ of $\inputDAG$.
		The case for the top level $L_0$ is trivial: it only contains the root node,
		which cannot be executed in parallel with any other and thus can be assigned to any agent.
		By induction on subsequent levels $L_i$, we can show agents are also optimally assigned at each one.
		Suppose that the assignment of agents and time slots for all nodes down to and including level $L_i$ is optimal.
		At $L_{i+1}$, there are two possibilities to consider.
		If $|L_{i+1}|\leq \varNumAgents$, some agents are idle in this time slot.
		However, this assignment cannot be improved upon: note that any lower values of $\varNumAgents$
		would have already been checked during earlier cycles of the while loop (l.~\ref{line:minsched:newloop}),
		and found to produce an invalid schedule where some nodes are left without any agent asssigned (l.~\ref{line:minsched:discard}).

		Conversely, if $|L_{i+1}| > \varNumAgents$, some nodes will be carried over to $L_{i+2}$.
		Similarly, it follows from the divide-and-conquer scheme in which the final value of $\varNumAgents$ is obtained (l.~\ref{line:minsched:newloop})
		that decreasing the number of agents further is impossible without adding an extra slot instead.

		Therefore, the assignment up to level $L_{i+1}$ cannot be improved and is optimal for a schedule containing $\numLevels$ time slots.
		Note that subsequently executed subprocedures \textsc{ReshuffleSlot} and \textsc{ZeroAssign} do not affect this in any way,
		since neither adds extra agents or time slots.

		Thus, for any $\inputDAG \in \DAGset$, schedule length is fixed at its theoretical minimum,
		and the optimality of agent assignment for this minimal length follows from the fact
		time slots are filled exhaustively wherever possible,
		but using the lowest number of agents that does not leave unassigned nodes (\ie an invalid schedule).
		Since all input \DAG/ variants are equivalent to the original \ADT/ \wrt{} scheduling
		(by \Cref{prop:pre:zero,prop:pre:time,prop:pre:sand}),
		it also holds that the assignment is optimal for the original \ADT/.

	\end{proof}
\end{proposition}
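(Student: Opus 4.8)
I would split the statement into a \emph{correctness} claim --- the pairs $(\agent,\slot)$ returned for each $\inputDAG\in\DAGset$ form a feasible schedule --- and an \emph{optimality} claim, the latter in two parts: minimality of the number of time slots, and, for that fixed length, minimality of the number of agents. The correctness part I would read off the control flow of \textsc{Schedule} (\Cref{algo:schedCandidate}): it sweeps $\inputDAG$ by increasing \level{}, so a \gateSEQ node at \level{}~$l$ enters $\workingSet$ only at the $l$-th iteration and is assigned a slot strictly smaller than the root's; hence precedence between a \gateSEQ parent and its \gateSEQ children is respected except for a node carried over to the next iteration, and the ancestor conjunct in the inner loop of \Cref{algo:schedCandidate} (forbidding $\N$ to share a slot with any of its ancestors) eliminates exactly that residual case. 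The guard at \Cref{line:sched:discard} aborts the attempt as soon as some pending node can no longer fit the subtree below it (of height $\depth$) in the remaining slots, so whenever \textsc{Schedule} returns with no node left, the output is a feasible length-$\varSlots$ schedule. Finally, by \Cref{prop:algo:reshuffle} \textsc{ReshuffleSlot} only swaps agents within one slot, and by \Cref{prop:algo:zero} \textsc{ZeroAssign} only attaches zero-duration nodes to an immediate parent or child, so neither disturbs feasibility.

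\textbf{Minimality of the schedule length.} This is immediate: \Cref{algo:minsched} fixes $\varSlots = root(\inputDAG).\depth$ (\Cref{line:minsched:numslots}), and by the computation of \depth{} (\Cref{algo:depth}) this is the number of non-zero-duration nodes on a longest root-to-leaf chain of $\inputDAG$. Those nodes lie on one path and cannot be parallelised, so no feasible schedule uses fewer slots, and the attack time $\numLevels\times\tunit$ is minimal with $\numLevels=\varSlots$.

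\textbf{Minimality of the number of agents.} I would reduce this to one lemma: \emph{completeness of the priority greedy} --- if any feasible length-$\varSlots$ schedule with $k$ agents exists, then \textsc{Schedule}$(\inputDAG,\varSlots,k)$ returns one; equivalently, a return of $\emptyset$ certifies that $k$ agents do not suffice for a length-$\varSlots$ schedule. The argument: the latest slot in which a node may start is $\varSlots-\depth$, so always picking a pending node of maximal \depth{} (\Cref{line:minsched:prioritise}) realises an earliest-deadline-first rule on the pool released at each \level{}, which is feasibility-optimal for the layered shape of the preprocessed \DAG/. Given this lemma, the predicate ``\textsc{Schedule} succeeds with $k$ agents'' is upward closed in $k$ (true feasibility trivially is, since an unused agent never hurts); its least true value lies strictly between the known-infeasible $\lceil\inputDAG.n/\varSlots\rceil-1$ and the known-feasible $\varMaxAgents$, and the divide-and-conquer loop at \Cref{line:minsched:newloop} returns precisely that value in $\varUpperBound$ (the boundary case $\varUpperBound=\varMaxAgents$ being covered by the extra call at \Cref{line:minsched:minagents}). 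The level-by-level induction of the stated proof then just invokes this lemma at each level $L_i$. Finally, since by \Cref{prop:pre:zero,prop:pre:time,prop:pre:sand} each variant $\inputDAG\in\DAGset$ is equivalent to the original \ADT/ for scheduling purposes, an assignment optimal for the variant is optimal for the \ADT/ under the corresponding choice of defences and \gateOR branches.

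\textbf{Expected main obstacle.} The crux is the completeness lemma: showing that a failure of \textsc{Schedule} is a genuine infeasibility certificate rather than an artefact of the highest-\depth{} priority. This forces exploiting the specific structure produced in \Cref{sec:preprocess} --- \gateSEQ chains joined only at zero-duration \gateAND{}, \gateOR{} and \gateNULL{} nodes, with every \gateOR{} already resolved inside each variant --- because on general unit-cost \DAG/s the analogous statement is false. The bookkeeping that relates \level{}, \depth{}, the current slot, and the carry-over of unassigned nodes through the guard at \Cref{line:sched:discard} is where the reasoning is most delicate, and where I would concentrate the effort.
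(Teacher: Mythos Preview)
Your proposal follows the same three-part decomposition as the paper --- correctness of the returned assignment, minimality of the schedule length via $\varSlots = root(\inputDAG).\depth$, and minimality of the number of agents --- and invokes the same auxiliary results (\Cref{prop:algo:reshuffle}, \Cref{prop:algo:zero}, and \Cref{prop:pre:zero,prop:pre:time,prop:pre:sand}) at the same points. The correctness and length arguments are essentially identical to the paper's.

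The substantive difference is in the agent-minimality part. The paper argues by induction on levels $L_i$, claiming at each step that the final $\varNumAgents$ cannot be decreased because the bisection loop at \Cref{line:minsched:newloop} would already have tested any smaller value and seen \textsc{Schedule} fail. You instead isolate the real content of that step as an explicit \emph{completeness lemma}: a failure of \textsc{Schedule} with $k$ agents certifies that no length-$\varSlots$ schedule with $k$ agents exists. This is sharper than the paper's treatment, which tacitly relies on exactly this lemma --- without it, ``\textsc{Schedule} failed with $k{-}1$'' does not imply ``$k{-}1$ agents are infeasible'' and the induction is circular --- but never states or argues it. Your earliest-deadline-first reading of the highest-\depth{} priority and your remark that the lemma genuinely depends on the restricted \gateSEQ-chain-plus-zero-duration-gate structure produced by \Cref{sec:preprocess} (since the analogous claim fails on general unit-cost \DAG/s) are both points the paper does not raise. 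So your route is the same skeleton, but with one joint made explicitly load-bearing that the paper leaves implicit; what you gain is a clearer view of where the actual difficulty sits, at the cost of having to supply the lemma's proof, which the paper simply does not attempt.
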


\subsection{Scheduling for the treasure hunters \ADT/}
\label{sec:algo:ex}

We now apply these algorithms to the treasure hunters example.
\Cref{fig:sched:pruneNlabel} shows the output of the three initial subprocedures.
The depth of nodes assigned by \Cref{algo:depth} is displayed in \textcolor{green!60!black}
{green}.
The branch corresponding to attack \leaf{e} has been pruned 
as per \Cref{sec:pre:or}.
Levels assigned by \Cref{algo:level} are displayed in \textcolor{blue}{blue}.
Finally, the agents assignment computed by \Cref{algo:minsched} is shown in \Cref{fig:sched:assign}.

\begin{figure}[!!htb]
	\centering
		\rowcolors{2}{lightgray!30}{white}

		\begin{tabular}{c|l|l}
			\diagbox[]{slot}{agent} & 1                                                & 2                       \\
			\hline
			125                     & \leaf{h_3}, \leaf{GA'}, \leaf{TF'_2}, \leaf{TS'} &                         \\
			124                     & \leaf{h_2}                                       &                         \\
			123                     & \leaf{h_1}, \leaf{h'}                            &                         \\
			122                     & \leaf{ST_2}, \leaf{TF'_1}                        &                         \\
			121                     & \leaf{ST_1}, \leaf{ST'}                          &                         \\
			120                     & \leaf{f_{120}}                                   & \leaf{b_{60}}           \\
			$\cdots$                & $\cdots$                                         & $\cdots$                \\
			61                      & \leaf{f_{61}}                                    & \leaf{b_{1}}, \leaf{b'} \\
			60                      & \leaf{f_{60}}                                    &                         \\
			$\cdots$                & $\cdots$                                         &                         \\
			1                       & \leaf{f_{1}}, \leaf{f'}                          &                         \\
			\hline
		\end{tabular}

		\caption{Treasure hunters: Assignment of \Cref{algo:minsched}}
		\label{fig:sched:assign}
\end{figure}

\section{Experiments}
\label{sec:expe}


We have implemented the algorithms presented in this paper in our open source tool \tool~\cite{adt2amas}, written in \texttt{C++17}.
It allows for specifying input \ADT/s either via simple-syntax text files or using
an intuitive graphical user interface,
and handles both their translation to extended AMAS and subsequent computation of an optimal schedule with minimal number of agents.
Furthermore, intermediary steps of the algorithm can be exported as Tikz figures, allowing to easily visualise and understand them.
For more detailed information on the architecture of \tool, we refer the reader to~\cite{ADT2AMASDemoPaper}.
In the following, we present its application to the use cases from \cite{ICFEM2020}, plus
examples that feature some specific behaviour. The user can find 
all the figures and tables of the examples in the supplementary material of
this paper, which is available at \url{https://up13.fr/?nkPtK4eY}.

\paragraph*{forestall} This case study models forestalling a software instance. Depending
on the active defences, 4 cases are possible. However, the \DAG/ for no active
defence or only \leaf{id} active is the same. All three remaining
\DAG/s  have an optimal schedule with only 1 agent, in 43 days for the no defence
(or \leaf{id} only) case, 54 if only \leaf{scr} is active, and 55 if both defences
occur. Although only a single agent is needed to achieve the attack in minimal time,
the schedule exhibits which specific attacks must be performed to do so.

\paragraph*{iot-dev} This example models an attack on an IoT device via a network.
There are 4 cases, according to the active
defences, but only the one with no
defence leads to a \DAG/. Indeed, \leaf{tla} causes the failure of \gate{GVC} which
in turn makes \gate{APN} and then \gate{APNS} fail, independent of \leaf{inc}.
Thus the attack necessarily fails. This is also the case if defence \leaf{inc} is
active. The only way for an attack to succeed is that all defences fail, leading to
an optimal schedule in 694 minutes with 2 agents. Hence an attacker will use 2 agents
to perform the fastest attack. On the other hand, the defender knows that a single
one of the two defences is sufficient to block any attack.

\paragraph*{gain-admin} This third case study features an attacker trying to gain
administration privileges on a computer system. There are 16 possible defences
combinations,
which are covered
by only 3 cases: \leaf{scr} is not active; \leaf{scr} is active but not \gate{DTH};
both of them are active. In all three cases, the shortest attack requires only a single
agent, and can be scheduled
in 2942, 4320 and 5762 minutes, respectively.


\paragraph*{Exhibiting particular scheduling features} Experiments were conducted
on the example used in~\cite{ICFEM2020} to evaluate the impact of the number of agents on the
attack time and two small examples designed
to exhibit particular characteristics of the schedule.
Our algorithm confirms an optimal schedule in 5 minutes with 6
agents for the example of~\cite{ICFEM2020}.
Then, \emph{interrupted}
shows that the scheduling algorithm can produce an the interleaved execution of two
attacks (\leaf{b} and \leaf{e}), assigned to the same agent.
Finally, the \emph{last} example provides a succession of nodes with 0 duration (
\leaf{a'},
\leaf{e'}, \leaf{f'}, \leaf{h'}
and \leaf{i'}), and shows they are handled as expected.

\paragraph*{Scaling example}
In the \emph{scaling} example, the first agent processes the longest path while the
second agent handles all other actions.
It is extended to analyse the scaling capabilities of the scheduling
algorithm. For this purpose, we wrote an automatic generator of \ADT/s and a
notebook that processes the output of our tool in order to create
\autoref{fig:scaling_experiments}. The parameters of the
generated \ADT/s are the \emph{depth},
the \emph{width} corresponding to the number of deepmost leaves,
the number of \emph{children} for each \gate{AND},
and the total number of \emph{nodes}.
All nodes have time 1 except the first leaf that has time $\mathit{width}-1$.
The results show that the number of agents is not proportional to the width of the
tree (red bars - top of \autoref{fig:scaling_experiments}), and the optimal scheduling varies
according to the time of nodes (blue bars - bottom of \autoref{fig:scaling_experiments}).

\begin{figure}[t]
  \centering
  \includegraphics[width=\linewidth]{}
  \caption{Scalability of different agent configurations\label{fig:scaling_experiments}}
\end{figure}

\section{Conclusion}
\label{sec:conclusion}

This paper has presented an agents scheduling algorithm that allows for evaluating
attack/defence models. It synthesises a minimal number of agents and their schedule,
providing insight to both parties as to the number of agents and actions necessary for a successful attack,
and the defences required to counter it.
While the scheduling algorithm itself is optimal,
further improvements can be made in the number of \DAG/ variants it is executed on.
One possible approach involves encoding configurations induced by \gateOR and defence nodes
to SAT or SMT and using a solver to find the optimal one, to be passed to the algorithm.

A natural extension is to consider characteristics other than time (\eg cost)
and, more importantly, the additional constraints on nodes,
thereby obtaining a complete framework for not only analysis but also synthesis of agent configurations
and schedules to achieve a given goal in a multi-agent system.
Targeting more elaborate goals, expressed in the TATL logic \cite{KnapikAPJP19},
will allow for analysing more general multi-agent systems and their properties.

\bibliographystyle{IEEEtran}
\balance
\bibliography{refs}


\clearpage
\appendices

\section{forestall}
\label{app:forestall}
\input{examples/forestall.tex}

\clearpage
\section{iot-dev}
\label{app:iotdev}
\input{examples/iot-dev.tex}

\clearpage
\section{gain-admin}
\label{app:gainadmin}
\input{examples/gain-admin.tex}

\clearpage
\section{interrupted}
\label{app:interrupted}

\begin{figure}[ht]
  \centering
  \begin{tikzpicture}
    \node at (0,0)   {\scalebox{.5}{

\begin{tikzpicture}
	[every node/.style={ultra thick,draw=red,minimum size=6mm},
	node distance=1.8cm]

	\node[and gate US,point up,logic gate inputs=nn] (a)
		{\rotatebox{-90}{\leaf{a}}};

	\node[and gate US,point up,logic gate inputs=nn,
			below left of = a] (c)
		{\rotatebox{-90}{\leaf{c}}};
	\draw (a.input 2) -- ([yshift=-0.15cm]a.input 2) -| (c.east);

	\node[state, below left of = a] (b) {\leaf{b}};
	\draw (a.input 1) -- ([yshift=-0.15cm]a.input 1) -| (b);

	\node[state, below left of = c] (d) {\leaf{d}};
	\draw (c.input 1) -- ([yshift=-0.15cm]c.input 1) -| (d);

	\node[state, below right of = c] (e) {\leaf{e}};
	\draw (c.input 2) -- ([yshift=-0.15cm]c.input 2) -| (e);
\end{tikzpicture}}};
    \node at (4.5,0) {\scalebox{.555}{\input{examples/tabAttributes_interrupted}}};
  \end{tikzpicture}
  \caption{Interrupted schedule example (\texttt{interrupted})}
  \label{fig:interrupted}
\end{figure}

\begin{figure}[!htb]
  \centering
    \scalebox{0.55}{
\begin{tikzpicture}[node distance=1.8cm]
  \tikzstyle{SEQ}=[diamond]
  \tikzstyle{NULL}=[trapezium, trapezium left angle=120, trapezium right angle=120, minimum size=8mm]
  \tikzstyle{AND}=[and gate US, rotate=90 ]
  \tikzstyle{OR}=[or gate US, rotate=90 ]
  \tikzset{every node/.style={ultra thick, draw=red, minimum size=6mm}}
  \node[draw=red, AND, logic gate inputs=nn, xshift=0.000000cm ] (a') {\rotatebox {-90}{\ensuremath{\mathtt{a'}}}};
  \node[draw=none, blue, xshift=2mm, yshift=7mm] at (a'.south) {\small{\ensuremath{\mathtt{level}}}};
  \node[draw=none, green!60!black, xshift=-2mm, yshift=7mm] at (a'.north) {\small{\ensuremath{\mathtt{depth}}}};
  \node[draw=none, blue, xshift=2mm, yshift=0mm] at (a'.south) {\small{\ensuremath{\mathtt{0}}}};
  \node[draw=none, green!60!black, xshift=-2mm, yshift=0mm] at (a'.north) {\small{\ensuremath{\mathtt{5}}}};
  \node[draw=red, SEQ, xshift=-1.25000cm , below of=a'] (b_2) {\ensuremath{\mathtt{b_
  {2}}}};
  \node[draw=none, blue, xshift=2mm, yshift=0mm] at (b_2.east) {\small{\ensuremath{\mathtt{0}}}};
  \node[draw=none, green!60!black, xshift=-2mm, yshift=0mm] at (b_2.west) {\small{\ensuremath{\mathtt{2}}}};
  \draw[solid] (a'.input 1) edge (b_2);
  \node[draw=red, SEQ, xshift=1.250000cm , below of=a'] (c_1) {\ensuremath{\mathtt{c_
  {1}}}};
  \node[draw=none, blue, xshift=2mm, yshift=0mm] at (c_1.east) {\small{\ensuremath{\mathtt{0}}}};
  \node[draw=none, green!60!black, xshift=-2mm, yshift=0mm] at (c_1.west) {\small{\ensuremath{\mathtt{5}}}};
  \draw[solid] (a'.input 2) edge (c_1);
  \node[draw=red, AND, logic gate inputs=nn, xshift=-0.000000cm , yshift=4mm, below = 1.4cm of c_1.south] (c') {\rotatebox {-90}{\ensuremath{\mathtt{c'}}}};
  \node[draw=none, blue, xshift=2mm, yshift=0mm] at (c'.south) {\small{\ensuremath{\mathtt{1}}}};
  \node[draw=none, green!60!black, xshift=-2mm, yshift=0mm] at (c'.north) {\small{\ensuremath{\mathtt{4}}}};
  \draw[solid] (c_1) edge (c'.east);
  \node[draw=red, SEQ, xshift=-1.000000cm , below of=c'] (d_4) {\ensuremath{\mathtt{d_
  {4}}}};
  \node[draw=none, blue, xshift=2mm, yshift=0mm] at (d_4.east) {\small{\ensuremath{\mathtt{1}}}};
  \node[draw=none, green!60!black, xshift=-2mm, yshift=0mm] at (d_4.west) {\small{\ensuremath{\mathtt{4}}}};
  \draw[solid] (c'.input 1) edge (d_4);
  \node[draw=red, SEQ, xshift=1.000000cm , below of=c'] (e_3) {\ensuremath{\mathtt{e_
  {3}}}};
  \node[draw=none, blue, xshift=2mm, yshift=0mm] at (e_3.east) {\small{\ensuremath{\mathtt{1}}}};
  \node[draw=none, green!60!black, xshift=-2mm, yshift=0mm] at (e_3.west) {\small{\ensuremath{\mathtt{3}}}};
  \draw[solid] (c'.input 2) edge (e_3);
  \node[draw=red, SEQ, xshift=-0.000000cm , below of=e_3] (e_2) {\ensuremath{\mathtt{e_{2}}}};
  \node[draw=none, blue, xshift=2mm, yshift=0mm] at (e_2.east) {\small{\ensuremath{\mathtt{2}}}};
  \node[draw=none, green!60!black, xshift=-2mm, yshift=0mm] at (e_2.west) {\small{\ensuremath{\mathtt{2}}}};
  \draw[solid] (e_3) edge (e_2);
  \node[draw=red, SEQ, xshift=-0.000000cm , below of=e_2] (e_1) {\ensuremath{\mathtt{e_{1}}}};
  \node[draw=none, blue, xshift=2mm, yshift=0mm] at (e_1.east) {\small{\ensuremath{\mathtt{3}}}};
  \node[draw=none, green!60!black, xshift=-2mm, yshift=0mm] at (e_1.west) {\small{\ensuremath{\mathtt{1}}}};
  \draw[solid] (e_2) edge (e_1);
  \node[draw=red, state, xshift=-0.000000cm , below of=e_1] (e') {\ensuremath{\mathtt{e'}}};
  \node[draw=none, blue, xshift=2mm, yshift=0mm] at (e'.east) {\small{\ensuremath{\mathtt{4}}}};
  \node[draw=none, green!60!black, xshift=-2mm, yshift=0mm] at (e'.west) {\small{\ensuremath{\mathtt{0}}}};
  \draw[solid] (e_1) edge (e');
  \node[draw=red, SEQ, xshift=-0.000000cm , below of=d_4] (d_3) {\ensuremath{\mathtt{d_{3}}}};
  \node[draw=none, blue, xshift=2mm, yshift=0mm] at (d_3.east) {\small{\ensuremath{\mathtt{2}}}};
  \node[draw=none, green!60!black, xshift=-2mm, yshift=0mm] at (d_3.west) {\small{\ensuremath{\mathtt{3}}}};
  \draw[solid] (d_4) edge (d_3);
  \node[draw=red, SEQ, xshift=-0.000000cm , below of=d_3] (d_2) {\ensuremath{\mathtt{d_{2}}}};
  \node[draw=none, blue, xshift=2mm, yshift=0mm] at (d_2.east) {\small{\ensuremath{\mathtt{3}}}};
  \node[draw=none, green!60!black, xshift=-2mm, yshift=0mm] at (d_2.west) {\small{\ensuremath{\mathtt{2}}}};
  \draw[solid] (d_3) edge (d_2);
  \node[draw=red, SEQ, xshift=-0.000000cm , below of=d_2] (d_1) {\ensuremath{\mathtt{d_{1}}}};
  \node[draw=none, blue, xshift=2mm, yshift=0mm] at (d_1.east) {\small{\ensuremath{\mathtt{4}}}};
  \node[draw=none, green!60!black, xshift=-2mm, yshift=0mm] at (d_1.west) {\small{\ensuremath{\mathtt{1}}}};
  \draw[solid] (d_2) edge (d_1);
  \node[draw=red, state, xshift=-0.000000cm , below of=d_1] (d') {\ensuremath{\mathtt{d'}}};
  \node[draw=none, blue, xshift=2mm, yshift=0mm] at (d'.east) {\small{\ensuremath{\mathtt{5}}}};
  \node[draw=none, green!60!black, xshift=-2mm, yshift=0mm] at (d'.west) {\small{\ensuremath{\mathtt{0}}}};
  \draw[solid] (d_1) edge (d');
  \node[draw=red, SEQ, xshift=-0.000000cm , below of=b_2] (b_1) {\ensuremath{\mathtt{b_{1}}}};
  \node[draw=none, blue, xshift=2mm, yshift=0mm] at (b_1.east) {\small{\ensuremath{\mathtt{1}}}};
  \node[draw=none, green!60!black, xshift=-2mm, yshift=0mm] at (b_1.west) {\small{\ensuremath{\mathtt{1}}}};
  \draw[solid] (b_2) edge (b_1);
  \node[draw=red, state, xshift=-0.000000cm , below of=b_1] (b') {\ensuremath{\mathtt{b'}}};
  \node[draw=none, blue, xshift=2mm, yshift=0mm] at (b'.east) {\small{\ensuremath{\mathtt{2}}}};
  \node[draw=none, green!60!black, xshift=-2mm, yshift=0mm] at (b'.west) {\small{\ensuremath{\mathtt{0}}}};
  \draw[solid] (b_1) edge (b');
\end{tikzpicture}
    }
    \caption{Preprocessing the \texttt{interrupted} \ADT/}
\end{figure}

\begin{table}[!!htb]
    \centering
    \rowcolors{2}{lightgray!30}{white}

      \begin{tabular}{c|l|l|}
        \diagbox[]{slot}{agent} & 1                         & 2                         \\
        \hline
1&{\ensuremath{\mathtt{d',d_{1}}}}&{\ensuremath{\mathtt{e',e_{1}}}}\\
2&{\ensuremath{\mathtt{d_{2}}}}&{\ensuremath{\mathtt{b',b_{1}}}}\\
3&{\ensuremath{\mathtt{d_{3}}}}&{\ensuremath{\mathtt{e_{2}}}}\\
4&{\ensuremath{\mathtt{d_{4}}}}&{\ensuremath{\mathtt{e_{3}}}}\\
5&{\ensuremath{\mathtt{c',c_{1}}}}&{\ensuremath{\mathtt{a',b_{2}}}}\\
        \hline
      \end{tabular}
    \caption{Assignment for \texttt{interrupted}}
\end{table}

\newpage
\section{example from~\protect\cite{ICFEM2020}}
\label{app:icfem2020}
\input{examples/icfem2020_example.tex}

\clearpage
\section{last}
\label{app:last}

\begin{figure}[ht]
  \centering
  \begin{tikzpicture}
    \node at (0,0)   {\scalebox{.5}{

\begin{tikzpicture}
	[every node/.style={ultra thick,draw=red,minimum size=6mm},
	node distance=1.8cm]

	\node[and gate US,point up,logic gate inputs=nn] (a)
		{\rotatebox{-90}{\leaf{a}}};

	\node[and gate US,point up,logic gate inputs=nn,
			above left of = a,yshift=1cm] (b)
		{\rotatebox{-90}{\leaf{b}}};
	\draw (a.input 1) -- ([yshift=-0.15cm]a.input 1) -| (b.east);

	\node[and gate US,point up,logic gate inputs=ni,
			below left of = a, yshift=-1cm] (e)
		{\rotatebox{-90}{\leaf{e}}};
	\draw (a.input 2) -- ([yshift=-0.15cm]a.input 2) -| (e.east);

	\node[and gate US,point up,logic gate inputs=nn,
			above left of = e,yshift=0.5cm] (f)
		{\rotatebox{-90}{\leaf{f}}};
	\draw (e.input 1) -- ([yshift=-0.15cm]e.input 1) -| (f.east);

	\node[and gate US,point up,logic gate inputs=ni,
			above left of = f,yshift=0.5cm] (h)
		{\rotatebox{-90}{\leaf{h}}};
	\draw (f.input 1) -- ([yshift=-0.15cm]f.input 1) -| (h.east);

	\node[and gate US,point up,logic gate inputs=ni,
			below left of = f,yshift=-0.5cm] (i)
		{\rotatebox{-90}{\leaf{i}}};
	\draw (f.input 2) -- ([yshift=-0.15cm]f.input 2) -| (i.east);

	\node[state, below left of = b] (c) {\leaf{c}};
	\draw (b.input 1) -- ([yshift=-0.15cm]b.input 1) -| (c);

	\node[state, below right of = b] (d) {\leaf{d}};
	\draw (b.input 2) -- ([yshift=-0.15cm]b.input 2) -| (d);

	\node[rectangle,draw=Green,minimum size=8mm, below right of = e]
		(g) {\leaf{g}};
	\draw (e.input 2) -- ([yshift=-0.1cm]e.input 2) -| (g);

	\node[state, below left of = h] (j) {\leaf{j}};
	\draw (h.input 1) -- ([yshift=-0.24cm]h.input 1) -| (j);

	\node[rectangle,draw=Green,minimum size=8mm, below right of = h]
		(k) {\leaf{k}};
	\draw (h.input 2) -- ([yshift=-0.1cm]h.input 2) -| (k);

	\node[state, below left of = i] (l) {\leaf{l}};
	\draw (i.input 1) -- ([yshift=-0.24cm]i.input 1) -| (l);

	\node[rectangle,draw=Green,minimum size=8mm, below right of = i]
		(m) {\leaf{m}};
	\draw (i.input 2) -- ([yshift=-0.1cm]i.input 2) -| (m);

\end{tikzpicture}}};
    \node at (4.5,0) {\scalebox{.555}{\input{examples/tabAttributes_last}}};
  \end{tikzpicture}
  \caption{Last example (\texttt{last})}
  \label{fig:last}
\end{figure}

\begin{figure}[!htb]
  \centering
    \scalebox{0.55}{
\begin{tikzpicture}[node distance=1.8cm]
  \tikzstyle{SEQ}=[diamond]
  \tikzstyle{NULL}=[trapezium, trapezium left angle=120, trapezium right angle=120, minimum size=8mm]
  \tikzstyle{AND}=[and gate US, rotate=90 ]
  \tikzstyle{OR}=[or gate US, rotate=90 ]
  \tikzset{every node/.style={ultra thick, draw=red, minimum size=6mm}}
  \node[draw=red, AND, logic gate inputs=nn, xshift=0.000000cm ] (a') {\rotatebox {-90}{\ensuremath{\mathtt{a'}}}};
  \node[draw=none, blue, xshift=2mm, yshift=7mm] at (a'.south) {\small{\ensuremath{\mathtt{level}}}};
  \node[draw=none, green!60!black, xshift=-2mm, yshift=7mm] at (a'.north) {\small{\ensuremath{\mathtt{depth}}}};
  \node[draw=none, blue, xshift=2mm, yshift=0mm] at (a'.south) {\small{\ensuremath{\mathtt{0}}}};
  \node[draw=none, green!60!black, xshift=-2mm, yshift=0mm] at (a'.north) {\small{\ensuremath{\mathtt{4}}}};
  \node[draw=red, SEQ, xshift=-2.250000cm , below of=a'] (b_1) {\ensuremath{\mathtt{b_{1}}}};
  \node[draw=none, blue, xshift=2mm, yshift=0mm] at (b_1.east) {\small{\ensuremath{\mathtt{0}}}};
  \node[draw=none, green!60!black, xshift=-2mm, yshift=0mm] at (b_1.west) {\small{\ensuremath{\mathtt{4}}}};
  \draw[solid] (a'.input 1) edge (b_1);
  \node[draw=red, NULL, xshift=2.250000cm , below of=a'] (e') {\ensuremath{\mathtt{e'}}};
  \node[draw=none, blue, xshift=2mm, yshift=0mm] at (e'.east) {\small{\ensuremath{\mathtt{0}}}};
  \node[draw=none, green!60!black, xshift=-2mm, yshift=0mm] at (e'.west) {\small{\ensuremath{\mathtt{1}}}};
  \draw[solid] (a'.input 2) edge (e');
  \node[draw=red, AND, logic gate inputs=nn, xshift=-0.000000cm , yshift=4mm, below = 1.4cm of e'.south] (f') {\rotatebox {-90}{\ensuremath{\mathtt{f'}}}};
  \node[draw=none, blue, xshift=2mm, yshift=0mm] at (f'.south) {\small{\ensuremath{\mathtt{0}}}};
  \node[draw=none, green!60!black, xshift=-2mm, yshift=0mm] at (f'.north) {\small{\ensuremath{\mathtt{1}}}};
  \draw[solid] (e') edge (f'.east);
  \node[draw=red, NULL, xshift=-1.250000cm , below of=f'] (h') {\ensuremath{\mathtt{h'}}};
  \node[draw=none, blue, xshift=2mm, yshift=0mm] at (h'.east) {\small{\ensuremath{\mathtt{0}}}};
  \node[draw=none, green!60!black, xshift=-2mm, yshift=0mm] at (h'.west) {\small{\ensuremath{\mathtt{1}}}};
  \draw[solid] (f'.input 1) edge (h');
  \node[draw=red, NULL, xshift=1.250000cm , below of=f'] (i') {\ensuremath{\mathtt{i'}}};
  \node[draw=none, blue, xshift=2mm, yshift=0mm] at (i'.east) {\small{\ensuremath{\mathtt{0}}}};
  \node[draw=none, green!60!black, xshift=-2mm, yshift=0mm] at (i'.west) {\small{\ensuremath{\mathtt{1}}}};
  \draw[solid] (f'.input 2) edge (i');
  \node[draw=red, SEQ, xshift=-0.000000cm , below of=i'] (l_1) {\ensuremath{\mathtt{l_{1}}}};
  \node[draw=none, blue, xshift=2mm, yshift=0mm] at (l_1.east) {\small{\ensuremath{\mathtt{0}}}};
  \node[draw=none, green!60!black, xshift=-2mm, yshift=0mm] at (l_1.west) {\small{\ensuremath{\mathtt{1}}}};
  \draw[solid] (i') edge (l_1);
  \node[draw=red, state, xshift=-0.000000cm , below of=l_1] (l') {\ensuremath{\mathtt{l'}}};
  \node[draw=none, blue, xshift=2mm, yshift=0mm] at (l'.east) {\small{\ensuremath{\mathtt{1}}}};
  \node[draw=none, green!60!black, xshift=-2mm, yshift=0mm] at (l'.west) {\small{\ensuremath{\mathtt{0}}}};
  \draw[solid] (l_1) edge (l');
  \node[draw=red, SEQ, xshift=-0.000000cm , below of=h'] (j_1) {\ensuremath{\mathtt{j_{1}}}};
  \node[draw=none, blue, xshift=2mm, yshift=0mm] at (j_1.east) {\small{\ensuremath{\mathtt{0}}}};
  \node[draw=none, green!60!black, xshift=-2mm, yshift=0mm] at (j_1.west) {\small{\ensuremath{\mathtt{1}}}};
  \draw[solid] (h') edge (j_1);
  \node[draw=red, state, xshift=-0.000000cm , below of=j_1] (j') {\ensuremath{\mathtt{j'}}};
  \node[draw=none, blue, xshift=2mm, yshift=0mm] at (j'.east) {\small{\ensuremath{\mathtt{1}}}};
  \node[draw=none, green!60!black, xshift=-2mm, yshift=0mm] at (j'.west) {\small{\ensuremath{\mathtt{0}}}};
  \draw[solid] (j_1) edge (j');
  \node[draw=red, AND, logic gate inputs=nn, xshift=-0.000000cm , yshift=4mm, below = 1.4cm of b_1.south] (b') {\rotatebox {-90}{\ensuremath{\mathtt{b'}}}};
  \node[draw=none, blue, xshift=2mm, yshift=0mm] at (b'.south) {\small{\ensuremath{\mathtt{1}}}};
  \node[draw=none, green!60!black, xshift=-2mm, yshift=0mm] at (b'.north) {\small{\ensuremath{\mathtt{3}}}};
  \draw[solid] (b_1) edge (b'.east);
  \node[draw=red, SEQ, xshift=-1.250000cm , below of=b'] (c_1) {\ensuremath{\mathtt{c_{1}}}};
  \node[draw=none, blue, xshift=2mm, yshift=0mm] at (c_1.east) {\small{\ensuremath{\mathtt{1}}}};
  \node[draw=none, green!60!black, xshift=-2mm, yshift=0mm] at (c_1.west) {\small{\ensuremath{\mathtt{1}}}};
  \draw[solid] (b'.input 1) edge (c_1);
  \node[draw=red, SEQ, xshift=1.250000cm , below of=b'] (d_3) {\ensuremath{\mathtt{d_{3}}}};
  \node[draw=none, blue, xshift=2mm, yshift=0mm] at (d_3.east) {\small{\ensuremath{\mathtt{1}}}};
  \node[draw=none, green!60!black, xshift=-2mm, yshift=0mm] at (d_3.west) {\small{\ensuremath{\mathtt{3}}}};
  \draw[solid] (b'.input 2) edge (d_3);
  \node[draw=red, SEQ, xshift=-0.000000cm , below of=d_3] (d_2) {\ensuremath{\mathtt{d_{2}}}};
  \node[draw=none, blue, xshift=2mm, yshift=0mm] at (d_2.east) {\small{\ensuremath{\mathtt{2}}}};
  \node[draw=none, green!60!black, xshift=-2mm, yshift=0mm] at (d_2.west) {\small{\ensuremath{\mathtt{2}}}};
  \draw[solid] (d_3) edge (d_2);
  \node[draw=red, SEQ, xshift=-0.000000cm , below of=d_2] (d_1) {\ensuremath{\mathtt{d_{1}}}};
  \node[draw=none, blue, xshift=2mm, yshift=0mm] at (d_1.east) {\small{\ensuremath{\mathtt{3}}}};
  \node[draw=none, green!60!black, xshift=-2mm, yshift=0mm] at (d_1.west) {\small{\ensuremath{\mathtt{1}}}};
  \draw[solid] (d_2) edge (d_1);
  \node[draw=red, state, xshift=-0.000000cm , below of=d_1] (d') {\ensuremath{\mathtt{d'}}};
  \node[draw=none, blue, xshift=2mm, yshift=0mm] at (d'.east) {\small{\ensuremath{\mathtt{4}}}};
  \node[draw=none, green!60!black, xshift=-2mm, yshift=0mm] at (d'.west) {\small{\ensuremath{\mathtt{0}}}};
  \draw[solid] (d_1) edge (d');
  \node[draw=red, state, xshift=-0.000000cm , below of=c_1] (c') {\ensuremath{\mathtt{c'}}};
  \node[draw=none, blue, xshift=2mm, yshift=0mm] at (c'.east) {\small{\ensuremath{\mathtt{2}}}};
  \node[draw=none, green!60!black, xshift=-2mm, yshift=0mm] at (c'.west) {\small{\ensuremath{\mathtt{0}}}};
  \draw[solid] (c_1) edge (c');
\end{tikzpicture}
    }
    \caption{Preprocessing the \texttt{last} \ADT/}
\end{figure}

\begin{table}[!!htb]
    \centering
    \rowcolors{2}{lightgray!30}{white}

      \begin{tabular}{c|l|l|}
        \diagbox[]{slot}{agent} & 1                         & 2                         \\
        \hline
1&{\ensuremath{\mathtt{d',d_{1}}}}&{\ensuremath{\mathtt{}}}\\
2&{\ensuremath{\mathtt{d_{2}}}}&{\ensuremath{\mathtt{i',l',l_{1}}}}\\
3&{\ensuremath{\mathtt{d_{3}}}}&{\ensuremath{\mathtt{c',c_{1}}}}\\
4&{\ensuremath{\mathtt{a',b',b_{1}}}}&{\ensuremath{\mathtt{e',f',h',j',j_{1}}}}\\
        \hline
      \end{tabular}
    \caption{Assignment for \texttt{last}}
\end{table}

\newpage
\section{scaling example}
\label{app:scaling}

\begin{figure}[ht]
  \centering
  \begin{tikzpicture}
    \node at (0,0)   {\scalebox{.5}{

\begin{tikzpicture}
	[every node/.style={ultra thick,draw=red,minimum size=6mm},
	node distance=1.8cm]

	\node[and gate US,point up,logic gate inputs=nn] (a)
		{\rotatebox{-90}{\leaf{a}}};

	\node[and gate US,point up,logic gate inputs=nn,
			above left of = a,yshift=0.5cm] (b)
		{\rotatebox{-90}{\leaf{b}}};
	\draw (a.input 1) -- ([yshift=-0.15cm]a.input 1) -| (b.east);

	\node[and gate US,point up,logic gate inputs=nn,
			below left of = a, yshift=-0.5cm] (c)
		{\rotatebox{-90}{\leaf{c}}};
	\draw (a.input 2) -- ([yshift=-0.15cm]a.input 2) -| (c.east);

	\node[state, below left of = b] (d) {\leaf{d}};
	\draw (b.input 1) -- ([yshift=-0.15cm]b.input 1) -| (d);

	\node[state, below right of = b] (e) {\leaf{e}};
	\draw (b.input 2) -- ([yshift=-0.15cm]b.input 2) -| (e);

	\node[state, below left of = c] (f) {\leaf{f}};
	\draw (c.input 1) -- ([yshift=-0.15cm]c.input 1) -| (f);

	\node[state, below right of = c] (g) {\leaf{g}};
	\draw (c.input 2) -- ([yshift=-0.15cm]c.input 2) -| (g);
\end{tikzpicture}}};
    \node at (4.5,0) {\scalebox{.555}{\input{examples/tabAttributes_scaling}}};
  \end{tikzpicture}
  \caption{Scaling example (\texttt{scaling})}
  \label{fig:scaling}
\end{figure}

\begin{figure}[!htb]
  \centering
    \scalebox{0.55}{
\begin{tikzpicture}[node distance=1.8cm]
  \tikzstyle{SEQ}=[diamond]
  \tikzstyle{NULL}=[trapezium, trapezium left angle=120, trapezium right angle=120, minimum size=8mm]
  \tikzstyle{AND}=[and gate US, rotate=90 ]
  \tikzstyle{OR}=[or gate US, rotate=90 ]
  \tikzset{every node/.style={ultra thick, draw=red, minimum size=6mm}}
  \node[draw=red, SEQ, xshift=0.000000cm ] (a_1) {\ensuremath{\mathtt{a_{1}}}};
  \node[draw=none, blue, xshift=2mm, yshift=7mm] at (a_1.east) {\small{\ensuremath{\mathtt{level}}}};
  \node[draw=none, green!60!black, xshift=-2mm, yshift=7mm] at (a_1.west) {\small{\ensuremath{\mathtt{depth}}}};
  \node[draw=none, blue, xshift=2mm, yshift=0mm] at (a_1.east) {\small{\ensuremath{\mathtt{0}}}};
  \node[draw=none, green!60!black, xshift=-2mm, yshift=0mm] at (a_1.west) {\small{\ensuremath{\mathtt{5}}}};
  \node[draw=red, AND, logic gate inputs=nn, xshift=-0.000000cm , yshift=4mm, below = 1.4cm of a_1.south] (a') {\rotatebox {-90}{\ensuremath{\mathtt{a'}}}};
  \draw[solid] (a_1) edge (a'.east);
  \node[draw=none, blue, xshift=2mm, yshift=0mm] at (a'.south) {\small{\ensuremath{
  \mathtt{1}}}};
  \node[draw=none, green!60!black, xshift=-2mm, yshift=0mm] at (a'.north) {\small{\ensuremath{\mathtt{4}}}};
  \node[draw=red, SEQ, xshift=-2.250000cm , below of=a'] (b_1) {\ensuremath{\mathtt{b_{1}}}};
  \node[draw=none, blue, xshift=2mm, yshift=0mm] at (b_1.east) {\small{\ensuremath{
  \mathtt{1}}}};
  \node[draw=none, green!60!black, xshift=-2mm, yshift=0mm] at (b_1.west) {\small{\ensuremath{\mathtt{4}}}};
  \draw[solid] (a'.input 1) edge (b_1);
  \node[draw=red, SEQ, xshift=2.250000cm , below of=a'] (c_1) {\ensuremath{\mathtt{c_{1}}}};
  \node[draw=none, blue, xshift=2mm, yshift=0mm] at (c_1.east) {\small{\ensuremath{\mathtt{1}}}};
  \node[draw=none, green!60!black, xshift=-2mm, yshift=0mm] at (c_1.west) {\small{\ensuremath{\mathtt{2}}}};
  \draw[solid] (a'.input 2) edge (c_1);
  \node[draw=red, AND, logic gate inputs=nn, xshift=-0.000000cm , yshift=4mm, below = 1.4cm of c_1.south] (c') {\rotatebox {-90}{\ensuremath{\mathtt{c'}}}};
  \node[draw=none, blue, xshift=2mm, yshift=0mm] at (c'.south) {\small{\ensuremath{\mathtt{2}}}};
  \node[draw=none, green!60!black, xshift=-2mm, yshift=0mm] at (c'.north) {\small{\ensuremath{\mathtt{1}}}};
  \draw[solid] (c_1) edge (c'.east);
  \node[draw=red, SEQ, xshift=-1.250000cm , below of=c'] (f_1) {\ensuremath{\mathtt{f_{1}}}};
  \node[draw=none, blue, xshift=2mm, yshift=0mm] at (f_1.east) {\small{\ensuremath{\mathtt{2}}}};
  \node[draw=none, green!60!black, xshift=-2mm, yshift=0mm] at (f_1.west) {\small{
  \ensuremath{\mathtt{1}}}};
  \draw[solid] (c'.input 1) edge (f_1);
  \node[draw=red, SEQ, xshift=1.250000cm , below of=c'] (g_1) {\ensuremath{\mathtt{g_{1}}}};
  \node[draw=none, blue, xshift=2mm, yshift=0mm] at (g_1.east) {\small{\ensuremath{\mathtt{2}}}};
  \node[draw=none, green!60!black, xshift=-2mm, yshift=0mm] at (g_1.west) {\small{\ensuremath{\mathtt{1}}}};
  \draw[solid] (c'.input 2) edge (g_1);
  \node[draw=red, state, xshift=-0.000000cm , below of=g_1] (g') {\ensuremath{\mathtt{g'}}};
  \node[draw=none, blue, xshift=2mm, yshift=0mm] at (g'.east) {\small{\ensuremath{\mathtt{3}}}};
  \node[draw=none, green!60!black, xshift=-2mm, yshift=0mm] at (g'.west) {\small{\ensuremath{\mathtt{0}}}};
  \draw[solid] (g_1) edge (g');
  \node[draw=red, state, xshift=-0.000000cm , below of=f_1] (f') {\ensuremath{\mathtt{f'}}};
  \node[draw=none, blue, xshift=2mm, yshift=0mm] at (f'.east) {\small{\ensuremath{\mathtt{3}}}};
  \node[draw=none, green!60!black, xshift=-2mm, yshift=0mm] at (f'.west) {\small{\ensuremath{\mathtt{0}}}};
  \draw[solid] (f_1) edge (f');
  \node[draw=red, AND, logic gate inputs=nn, xshift=-0.000000cm , yshift=4mm, below = 1.4cm of b_1.south] (b') {\rotatebox {-90}{\ensuremath{\mathtt{b'}}}};
  \node[draw=none, blue, xshift=2mm, yshift=0mm] at (b'.south) {\small{\ensuremath{\mathtt{2}}}};
  \node[draw=none, green!60!black, xshift=-2mm, yshift=0mm] at (b'.north) {\small{\ensuremath{\mathtt{3}}}};
  \draw[solid] (b_1) edge (b'.east);
  \node[draw=red, SEQ, xshift=-1.250000cm , below of=b'] (d_1) {\ensuremath{\mathtt{d_{1}}}};
  \node[draw=none, blue, xshift=2mm, yshift=0mm] at (d_1.east) {\small{\ensuremath{\mathtt{2}}}};
  \node[draw=none, green!60!black, xshift=-2mm, yshift=0mm] at (d_1.west) {\small{\ensuremath{\mathtt{1}}}};
  \draw[solid] (b'.input 1) edge (d_1);
  \node[draw=red, SEQ, xshift=1.250000cm , below of=b'] (e_3) {\ensuremath{\mathtt{e_{3}}}};
  \node[draw=none, blue, xshift=2mm, yshift=0mm] at (e_3.east) {\small{\ensuremath{\mathtt{2}}}};
  \node[draw=none, green!60!black, xshift=-2mm, yshift=0mm] at (e_3.west) {\small{\ensuremath{\mathtt{3}}}};
  \draw[solid] (b'.input 2) edge (e_3);
  \node[draw=red, SEQ, xshift=-0.000000cm , below of=e_3] (e_2) {\ensuremath{\mathtt{e_{2}}}};
  \node[draw=none, blue, xshift=2mm, yshift=0mm] at (e_2.east) {\small{\ensuremath{\mathtt{3}}}};
  \node[draw=none, green!60!black, xshift=-2mm, yshift=0mm] at (e_2.west) {\small{\ensuremath{\mathtt{2}}}};
  \draw[solid] (e_3) edge (e_2);
  \node[draw=red, SEQ, xshift=-0.000000cm , below of=e_2] (e_1) {\ensuremath{\mathtt{e_{1}}}};
  \node[draw=none, blue, xshift=2mm, yshift=0mm] at (e_1.east) {\small{\ensuremath{\mathtt{4}}}};
  \node[draw=none, green!60!black, xshift=-2mm, yshift=0mm] at (e_1.west) {\small{\ensuremath{\mathtt{1}}}};
  \draw[solid] (e_2) edge (e_1);
  \node[draw=red, state, xshift=-0.000000cm , below of=e_1] (e') {\ensuremath{\mathtt{e'}}};
  \node[draw=none, blue, xshift=2mm, yshift=0mm] at (e'.east) {\small{\ensuremath{\mathtt{5}}}};
  \node[draw=none, green!60!black, xshift=-2mm, yshift=0mm] at (e'.west) {\small{\ensuremath{\mathtt{0}}}};
  \draw[solid] (e_1) edge (e');
  \node[draw=red, state, xshift=-0.000000cm , below of=d_1] (d') {\ensuremath{\mathtt{d'}}};
  \node[draw=none, blue, xshift=2mm, yshift=0mm] at (d'.east) {\small{\ensuremath{\mathtt{3}}}};
  \node[draw=none, green!60!black, xshift=-2mm, yshift=0mm] at (d'.west) {\small{\ensuremath{\mathtt{0}}}};
  \draw[solid] (d_1) edge (d');
\end{tikzpicture}
    }
    \caption{Preprocessing the \texttt{scaling} \ADT/}
\end{figure}

\begin{table}[!!htb]
    \centering
    \rowcolors{2}{lightgray!30}{white}

      \begin{tabular}{c|l|l|}
        \diagbox[]{slot}{agent} & 1                         & 2                         \\
        \hline
1&{\ensuremath{\mathtt{e',e_{1}}}}&{\ensuremath{\mathtt{g',g_{1}}}}\\
2&{\ensuremath{\mathtt{e_{2}}}}&{\ensuremath{\mathtt{f',f_{1}}}}\\
3&{\ensuremath{\mathtt{e_{3}}}}&{\ensuremath{\mathtt{d',d_{1}}}}\\
4&{\ensuremath{\mathtt{b',b_{1}}}}&{\ensuremath{\mathtt{c',c_{1}}}}\\
5&{\ensuremath{\mathtt{a',a_{1}}}}&\\
        \hline
      \end{tabular}
    \caption{Assignment for \texttt{scaling}}
\end{table}
\begin{table}
\centering
\caption{Scalability of different agent configurations}
\label{scalability}
\begin{tabular}{|r|r|r|r||r|r|}
\hline
\textbf{depth} & \textbf{width} & \textbf{\# children} & \textbf{|ADTree|} & \textbf{\# agents} & \textbf{\# slots} \\
\hline\hline
    2 &     2 &          2 &        5 &        2 &       3 \\
    2 &     4 &          2 &        7 &        3 &       4 \\
    2 &     4 &          4 &        9 &        4 &       3 \\
    2 &     6 &          6 &       13 &        6 &       3 \\
    2 &     8 &          4 &       13 &        3 &       6 \\
    2 &     8 &          8 &       17 &        8 &       3 \\
    2 &    10 &         10 &       21 &       10 &       3 \\
    2 &    12 &          4 &       17 &        3 &      10 \\
    2 &    12 &          6 &       19 &        4 &       8 \\
    2 &    16 &          8 &       25 &        4 &      10 \\
    3 &     2 &          2 &        7 &        2 &       4 \\
    3 &     4 &          2 &        9 &        3 &       5 \\
    3 &     4 &          4 &       13 &        4 &       4 \\
    3 &     6 &          2 &       13 &        3 &       7 \\
    3 &     6 &          6 &       19 &        6 &       4 \\
    3 &     8 &          2 &       15 &        3 &       9 \\
    3 &     8 &          4 &       17 &        4 &       7 \\
    3 &     8 &          8 &       25 &        8 &       4 \\
    3 &    10 &         10 &       31 &       10 &       4 \\
    3 &    12 &          4 &       21 &        3 &      11 \\
    3 &    12 &          6 &       25 &        4 &       9 \\
    3 &    16 &          8 &       33 &        4 &      11 \\
    4 &     2 &          2 &        9 &        2 &       5 \\
    4 &     4 &          2 &       11 &        3 &       6 \\
    4 &     4 &          4 &       17 &        4 &       5 \\
    4 &     6 &          2 &       15 &        3 &       8 \\
    4 &     6 &          6 &       25 &        6 &       5 \\
    4 &     8 &          2 &       17 &        3 &      10 \\
    4 &     8 &          4 &       21 &        4 &       8 \\
    4 &     8 &          8 &       33 &        8 &       5 \\
    4 &    10 &          2 &       23 &        3 &      12 \\
    4 &    10 &         10 &       41 &       10 &       5 \\
    4 &    12 &          4 &       25 &        3 &      12 \\
    4 &    12 &          6 &       31 &        4 &      10 \\
    4 &    16 &          8 &       41 &        5 &      12 \\
    5 &     2 &          2 &       11 &        2 &       6 \\
    5 &     4 &          2 &       13 &        3 &       7 \\
    5 &     4 &          4 &       21 &        4 &       6 \\
    5 &     6 &          2 &       17 &        3 &       9 \\
    5 &     6 &          6 &       31 &        6 &       6 \\
    5 &     8 &          2 &       19 &        3 &      11 \\
    5 &     8 &          4 &       25 &        4 &       9 \\
    5 &     8 &          8 &       41 &        8 &       6 \\
    5 &    10 &          2 &       25 &        3 &      13 \\
    5 &    10 &         10 &       51 &       10 &       6 \\
    5 &    12 &          4 &       29 &        3 &      13 \\
    5 &    12 &          6 &       37 &        5 &      11 \\
    5 &    16 &          8 &       49 &        5 &      13 \\
\hline
\end{tabular}
\end{table}

\end{document}